\newcommand{\Id}[1]{\ensuremath{\mathit{#1}}}
\newcommand{\realrange}[2]{\left[#1, #2\right]}
\newcommand{\unitrange}[2]{\realrange{0}{1}}
\newcommand{\Oh}[1]{\mathrm{O}\!\left( #1\right)}
\newcommand{\Th}[1]{\mathrm{\Theta}\!\left( #1\right)}
\newcommand{\Om}[1]{\mathrm{\Omega}\!\left(#1\right)}
\newcommand{\llabel}[1]{\label{\labelprefix:#1}}
\newcommand{\labelprefix}{} 
\newcommand{\punkt}{\text{.}}  
\newcommand{\komma}{\text{,}} 
\newcommand{\labelcommand}{}
\newcommand{\captiontext}{}
\newsavebox{\buchalgorithmparam}
\newcounter{lineNumber}
\newenvironment{buchalgorithmpos}[3]{%
\renewcommand{\labelcommand}{#2}%
\renewcommand{\captiontext}{#3}%
\sbox{\buchalgorithmparam}{\parbox{\textwidth}{#3}}%
\begin{figure}[#1]\begin{code}\setcounter{lineNumber}{1}}
{\end{code}%
\caption{\llabel{\labelcommand}\captiontext}\end{figure}}
\newenvironment{code}{\noindent%
\begin{tabbing}%
\hspace{1.5em}\=\hspace{1.5em}\=\hspace{1.5em}\=\hspace{1.5em}\=\hspace{1.5em}\=\hspace{1.5em}\=\hspace{1.5em}\=%
\hspace{1.5em}\=\hspace{1.5em}\=\hspace{1.5em}\=\hspace{1.5em}\=\hspace{1.5em}\=%
\kill}{\end{tabbing}}
\newcommand{\Procedure}{{\bf Procedure\ }}
\newlength{\mynegthinspace}
\newlength{\mysmallspace}
\newcommand{\Repeat}   {{\bf repeat\ }}
\newcommand{\Until}    {{\bf until\ }}
\newcommand{\Is}{\ensuremath{\mathbin{:=}}}
\newcommand{\If}       {{\bf if\ }}
\newcommand{\Then}     {{\bf then\ }}
\newcommand{\Return}   {{\bf return\ }}
\newcommand{\Or}       {\ensuremath{\vee}}
\newcommand{\Rem}[1]   {{\bf /\!/\hspace{0.5mm}{\rm#1}}}
\newdimen\endofsize\endofsize=0.5em
\newcommand{\donotshow}[1]{}
\newcommand{\ignore}[1]{}
\newcommand{\mbegin}{\{\ \ }
\newlength{\mleftindent}
\newlength{\mindent}
\newlength{\mboxwidth}
\newlength{\preprogramskip}
\newlength{\postprogramskip}
\newlength{\mexpwidth}
\newlength{\mexpindent}
\newlength\fboxsepsave
\newcommand{\myurl}[1]{{\footnotesize \url{#1}}}
\newcommand{\myunderline}[1]{{\kern-0.05em\underline{\kern0.05em #1\kern-0.05em}\kern0.05em}}
\newcommand{\myoverline}[1]{{\kern0.05em\overline{\kern-0.05em #1\kern0.05em}\kern-0.05em}}
\newcommand{\set}[1]{\left\{ #1\right\}}
\newcommand{\titleText}{Engineering MultiQueues: Fast Relaxed Concurrent Priority Queues}
\title{\titleText}
\author{Marvin Williams}{Karlsruhe Institute of Technology, Germany}{williams@kit.edu}{}{}
\author{Peter Sanders}{Karlsruhe Institute of Technology, Germany}{sanders@kit.edu}{}{}
\author{Roman Dementiev}{Intel Deutschland GmbH, Munich, Germany}{roman.dementiev@intel.com}{}{}
\authorrunning{M. Williams, P. Sanders and R. Dementiev}
\authorrunning{M. Williams, P. Sanders and R. Dementiev}
\keywords{concurrent data structure, priority queues, randomized algorithms, wait-free locking}
\begin{document}

\maketitle

\begin{abstract}
  Priority queues with parallel access are an attractive data structure for
  applications like prioritized online scheduling, discrete event simulation,
  or greedy algorithms. However, a classical priority queue constitutes a
  severe bottleneck in this context, leading to very small throughput. Hence,
  there has been significant interest in concurrent priority queues with
  relaxed semantics.  We investigate the complementary quality
  criteria \emph{rank error} (how close are deleted elements to the global
  minimum) and \emph{delay} (for each element $x$, how many elements with lower
  priority are deleted before $x$).  In this paper, we introduce
  \emph{MultiQueues} as a natural approach to relaxed priority queues based on
  multiple sequential priority queues.  Their naturally high theoretical
  scalability is further enhanced by using three orthogonal ways of batching
  operations on the sequential queues.  Experiments indicate that MultiQueues
  present a very good performance--quality tradeoff and considerably
  outperform competing approaches in at least one of these aspects.

  We employ a seemingly paradoxical technique of \textquote{wait-free locking} that
  might be of more general interest to convert sequential data structures to
  relaxed concurrent data structures.
\end{abstract}

\clearpage
\section{Introduction}\label{intro}

Priority queues (PQs) are a fundamental data structure for many applications.
They manage a set of elements and support operations for efficiently inserting
elements and deleting the smallest element (\Id{deleteMin}).  Whenever we have
to dynamically reorder operations performed by an algorithm, PQs can turn out
to be useful. Examples include job scheduling, graph algorithms for shortest
paths and minimum spanning trees, discrete event simulation, best first
branch-and-bound, and other best first heuristics.

\begin{figure}[b]
  \centering
  \includegraphics[width=\textwidth]{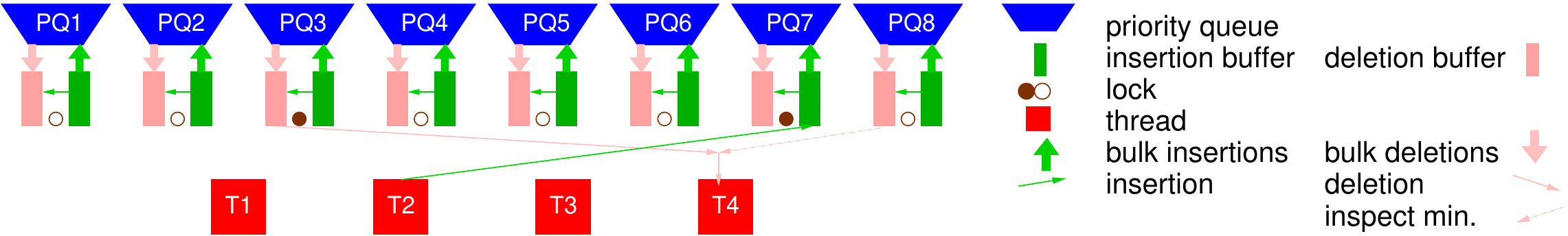}
  \caption{\label{fig:multiQueue}A MultiQueue with 8 sequential queues and 4 threads. Thread T2 currently locks PQ7 to insert an element. T4 inspected PQ3 and PQ8 and now locks PQ8 to remove its smallest element.}
\end{figure}

On modern parallel hardware, we often have the situation that $p$ parallel
threads want to access the PQ concurrently both to insert and to delete
elements. This is problematic for several reasons. First of all, even the
semantics of a parallel PQ is unclear.  The classical notion of serializability
is not only expensive to achieve but also not very useful from an application
point of view. For example, in a branch-and-bound application, a serializable
parallel PQ could arbitrarily postpone insertion and corresponding deletion of
a search tree node on the path leading to the eventual solution. This makes the
application arbitrarily slower than the sequential one.  The \emph{ideal}
semantics of a concurrent PQ (CPQ) would be that any element for which an
insertion has started becomes visible instantaneously for deletion from any
other thread. This is unattainable for fundamental physical reasons but can
serve as a basis for defining the quality of relaxed priority queues (RPQs).
In \cref{prel} we use this approach to define the complementary notions of the
\emph{rank error} of deleted elements and the \emph{delay} of elements that are
overtaken by larger elements.  After discussing related work in \cref{rel},
\cref{mq} describes the main contribution of this paper: MultiQueues are a
simple approach to CPQs based on $c\cdot p$ sequential PQs (SPQs) for some
constant $c>1$. Insertions go to a random SPQ so that each of them contains a
representative sample of the globally available elements. Deleted elements are
the smaller of the minima of \emph{two} randomly chosen SPQs. By choosing two
rather than one queue, fluctuations in the distribution of queue elements are
stabilized.  Consistency is maintained by locking queues that are changed.  See
\cref{fig:multiQueue} for an example.  Since there are more queues than
threads, no thread ever has to wait for a lock. Thus, we achieve a lock-free
(and even wait-free) algorithm despite using locks which is an interesting
feature of MultiQueues.

Although MultiQueues scale better than competitors both in theory and practice,
they have the practical problem that they lack cache locality -- in each
operation, a thread accesses several cache lines from randomly chosen SPQs
which are rarely reused but usually cause cache invalidation costs later. We
therefore introduce three orthogonal measures for improving locality: (1)
Insertion and deletion \emph{buffers} ensure that most operations access only a
small number of cache lines. (2) We use sequential queues that allow \emph{bulk
access} in a more cache-efficient way than using several single-element
operations.  (3) Threads optionally \emph{stick} to the same set of queues for
several consecutive operations.  In the analysis (\cref{ss:analysis}), we show
that MultiQueue operations are almost as fast as their sequential counterparts
-- scaling linearly with the number of threads. We are also able to analyze
rank error and delay under moderately simplifying assumptions -- it is linear in
expectation and $\Oh{p\log p}$ with high probability.  \Cref{exp} summarizes an
extensive experimental evaluation of MultiQueues including a comparison with
alternative approaches. \Cref{concl} concludes the paper.

\section{Preliminaries}\label{prel}

A priority queue $pq$ represents a set of elements. We use $n=|pq|$ for the size
of the queue. Classical priority queues support the operations \Id{insert} for
inserting an element and \Id{deleteMin} for obtaining and removing the smallest
element.  The most frequently used SPQ is the binary heap \cite{Wil64}.

A \emph{relaxed CPQ} does not require the \Id{deleteMin} operation to
return the minimum element.  The \emph{rank} of an element of a set
$M$ is its position in a sorted representation of $M$.  A natural
quality criterion for the \Id{deleteMin} operation is the \emph{rank
  error} of the returned element, i.e., the number of elements in the
CPQ at the time of deletion that are smaller than the returned
element.  Over the entire use of the CPQ, one can look at the mean
rank, the largest observed rank, or, more generally, the distribution
of observed ranks.  A complementary measure that has previously been
largely neglected is the \emph{delay} of a queue element $x$. The
delay of $x$ is the number of elements with lower priority than $x$
that are deleted before $x$.  This measure is important because it can
be closely tied to the performance of applications. For example,
consider a CPQ used in an optimization problem where the currently
best queue element leads to the ultimate solution. If the CPQ from now
on always remove the second best element, it has excellent rank error
but will never find the solution because the best element in
infinitely delayed.

It is difficult for a CPQ to consistently and efficiently maintain its exact
\emph{size}. A straightforward way to approximate the size can for example be
implemented similarly to concurrent hash tables \cite{MaierSD19}.  While
knowing the size is not of direct importance for most applications, many of
them need some kind of \emph{termination detection}.  Sequential algorithms
would often terminate after they found the queue to be \emph{empty}.  Even the
check for emptiness is difficult in concurrent settings since there is no way
to know whether some other thread has concurrently called an insertion
operation (or is about to do so). Hence, we view termination detection as a
problem that has to be solved by the application. We adopt the semantics of
most RPQs we are aware of that the \Id{deleteMin} operation is allowed to fail
-- implying that the queue could not find any remaining elements without being
able to prove that the queue is actually empty.  A failed \Id{deleteMin} has
the current size of the queue as its rank error, i.e., as if $\infty$ was
removed from the queue.

An algorithm is \emph{wait-free} if it is guaranteed to make progress
in a bounded number of steps \cite{HerSha08}. Since we discuss
randomized algorithms, we use this term also if the bound is
probabilistic, i.e., if the expected number of steps is bounded.

\section{Related Work}\label{rel}
There has been considerable work on bulk parallel priority queues (BPQs) in the 1990s \cite{DeoPra92,RanEtAl94,San98a}.
BPQs differ from CPQs in that they assume synchronized batched operation but
they are still relevant as a source of ideas for asynchronous implementations.
Indeed, Sanders \cite{San98a} already discusses how these data structures could
be made asynchronous in principle: Queue server threads periodically and
collectively extract the globally smallest elements from the queue moving them
into a buffer that can be accessed asynchronously. Note that within this
buffer, priorities can be ignored since all buffered elements have a low rank.
Similarly, an insertion buffer keeps recently inserted elements.  Moreover, the
best theoretical  results on BPQs give us an idea of how well RPQs should scale
asymptotically. For example, Sanders' BPQ \cite{San98a} removes the $p$
smallest elements of the BPQ in time $\Oh{\log n}$. This indicates that the
worst-case rank error and delay close to \emph{linear} in the number of threads
should be achievable.

Sanders' BPQ \cite{San98a} is based on the very simple idea to maintain a local
SPQ on each thread and to send inserted elements to random threads. This idea
is older, stemming from Karp and Zhang \cite{KarZha93}.  This approach could
actually be used as an RPQ. Elements are inserted into the SPQ of a randomly
chosen thread. Each thread deletes elements from its local SPQ. It is shown
that this approach leads to only a constant factor more work compared to a
sequential algorithm for a globally synchronized branch-and-bound application
where producing and consuming elements takes constant time. Unfortunately, for
a general asynchronous RPQ, the Karp-Zhang-queue \cite{KarZha93} has
limitations since slow threads could \textquote{sit} on small elements, while
fast threads would busily process elements with high rank -- in the worst case,
the rank error could grow arbitrarily large.  Our MultiQueue builds on the
Karp-Zhang-queue \cite{KarZha93}, adapting it to a shared memory setting,
decoupling the number of queues from the number of threads, and, most
importantly, using a more robust protocol for \Id{deleteMin}.

Many previous CPQs are based on the SkipList data structure \cite{Pug90}. At
its bottom, the SkipList is a sorted linked list of elements. Search is
accelerated by additional layers of linked lists.  Each list in level $i$ is a
random sample of the list in level $i-1$.  Many previous CPQs delete the exact
smallest element \cite{ShaLot00,SunTsi03,LinJon13,CMH14}. This works well if
there are not too many concurrent \Id{deleteMin} operations competing for
deleting the same elements. However, this inevitably results in heavy
contention if very high throughput is required.  The SprayList \cite{AKLS14}
reduces contention for \Id{deleteMin} by removing not the global minimum but an
element among the $\Oh{p\log^3 p}$ smallest elements.
However, for worst case inputs, insertions can still cause heavy contention.
This is a fundamental problem of any data structure that attempts to maintain a
single globally sorted sequence.  Wimmer et al. \cite{WVTCT14} describe a RPQ for task scheduling based on a
hybrid between local and global linked lists and local SPQs. Measurements in
Alistarh et al. \cite{AKLS14} indicate that this data structure does not scale
as well as SprayLists -- probably due to a frequently accessed central linked
list.  Henzinger et al. \cite{HKPSS13} give a formal specification of RPQs and
mention a SkipList based implementation without giving details.  Interestingly,
for a relaxed FIFO-queue, the same group proposes a MultiQueue-like structure
\cite{HLHPSKS13}.

The \emph{contention avoiding priority queue (CAPQ)}
\cite{sagonas2016contention,sagonas2018contention} is based on a centralized
skip-list $S$ but switches to thread-local insertion and deletion buffers when
it detects contention on $S$.  To maintain some global view, operations will
still occasionally use $S$.  This combines high accuracy in uncontended
situations with high throughput under contention.  However, the quality penalty
for switching to local buffers is fairly high.  Assuming the centralized queue
is accessed every $m \in \Th{p}$ steps (which seems necessary to avoid contention)
the paper shows that a rank error in $\Oh{p^2}$ is guaranteed for every $m$-th step.
Nothing can be guaranteed for the remaining fraction of $1-1/m$ operations.
\footnote{Such a guarantee could also be achieved with a simplistic version of
MultiQueues where each thread inserts and deletes from a fixed queue except
that every $p$ steps a thread scans all queues for the globally smallest
element.}

For large SPQs, cache-efficient data structures are useful.  Unfortunately, the
best of these (e.g., \cite{San00b}) are not useful for CPQs since
they are only efficient in an amortized sense and locking an SPQ for an
extended period of time could lead to large rank errors and delays.

\section{MultiQueues}\label{mq}

We first describe the basic MultiQueue in \cref{ss:multiQueue} and then refine
it to improve cache locality in Sections \ref{ss:buffering}--\ref{ss:sticky}.
In \Cref{ss:analysis} we provide a simplified theoretical analysis of the run
time and quality of the MultiQueue. \Cref{ss:implementation} discusses
implementation details.

\subsection{Basic MultiQueue}\label{ss:multiQueue}
\begin{figure}
  \begin{minipage}[t]{0.4\textwidth}
    \begin{code}
      \Procedure insert$(e)$\+\\
      \Repeat\+\\
        $i\Is \Id{uniformRandom}(1..cp)$\\
        try to lock $Q[i]$\-\\
      \Until lock was successful\\
      $Q[i].\Id{insertToSPQ}(e)$\\
      unlock $Q[i]$\\
    \end{code}
  \end{minipage}\hspace*{0.1\textwidth}
  \begin{minipage}[t]{0.5\textwidth}
    \begin{code}
      \Procedure deleteMin\+\\
      \Repeat\+\\
        $i\Is \Id{uniformRandom}(1..cp)$\\
        $j\Is \Id{uniformRandom}(1..cp)$\\
        \If $Q[i].\Id{min}>Q[j].\Id{min}$ \Then swap $i$, $j$\\
        try to lock $Q[i]$\-\\
      \Until lock was successful\\
      $e\Is Q[i].\Id{deleteMinFromSPQ}$;\quad
      unlock $Q[i]$\\
      \Return $e$
    \end{code}
  \end{minipage}
\caption{\label{alg:deleteMinMQ}\label{alg:insertMQ}Pseudocode for basic
MultiQueue \Id{insert} and \Id{deleteMin}. This code assumes that an empty SPQ
returns $\infty$ as the minimum element. A return value of $\infty$ from
\Id{deleteMin} then indicates that no element could be found (which does not
guarantee that all SPQs are empty). A practical implementation might make
additional efforts to find elements when encountering empty SPQs.}
\end{figure}

The basic MultiQueue data structure is an array {\tt Q} of $c \cdot p$ SPQs
where $c$ is a tuning parameter and $p$ is the number of parallel threads.
\Cref{fig:multiQueue} gives an example with $c=2$.  Access to each local queue
is protected by a lock flag.  The \Id{insert} operation locks a random unlocked
queue $\mathtt{Q}[i]$ and inserts the element into $\mathtt{Q}[i]$.
\Cref{alg:insertMQ} gives pseudocode.  Note that this operation is wait-free
since we never wait for a locked queue.  Since at most $p$ queues can be locked
at any time, for $c>1$ we will have a constant success probability.  Hence, the
expected time for acquiring a queue is constant.  Together with the time for
insertion we get expected insertion time $\Oh{\log n}$.

An analogous implementation of \Id{deleteMin} would lock a random unlocked
queue and return its minimal element.  However, the quality of this approach
leaves a lot to be desired.  In particular, it deteriorates not only with $p$
but also with the queue size. One can show that the rank error grows
proportional to $\sqrt{n}$ due to random fluctuations in the number of
operations addressing the individual queues. Therefore we invest slightly more
effort into the \Id{deleteMin} operation by looking at \emph{two} random queues
and deleting from the one with the smaller minimum.
\Cref{alg:deleteMinMQ} gives pseudocode for the \Id{insert} and \Id{deleteMin}
operations.  Our intuition why considering two choices may be useful stems from
previous work on randomized load balancing, where it is known that placing a
ball on the least loaded of two randomly chosen machines gives a maximum load
that is very close to the average load independent of the number of allocated
balls \cite{BCSV00}.

Even when the queue is small, cache efficiency is a major issue for MultiQueues
since accessing a queue $\mathtt{Q}[i]$ from a thread $j$ will move the cache
lines accessed by the operation into the cache of thread $j$. But most likely,
some random other thread $j'$ will next need that data causing not only cache
misses for $j'$ but also invalidation traffic for $j$.

\subsection{Buffering}\label{ss:buffering}

\begin{figure}
  \begin{minipage}[t]{0.48\textwidth}
    \begin{code}
      \Procedure insertToSPQ$(e)$\+\\
        \If $D = \emptyset \Or e<\max D$ \Then\+\\
          \If $|D|<b$ \Then $D\Is D\cup \set{e}$;\quad \Return\\
          $(e,D)\Is (\max D, (D\setminus\set{\max D})\cup\set{e})$\-\\
        \If $|I|=b$ \Then flush $I$ to $M$\\
        $I\Is I\cup\set{e}$
    \end{code}
  \end{minipage}\hfill
  \begin{minipage}[t]{0.48\textwidth}
    \begin{code}
      \Procedure deleteFromSPQ\+\\
      \If $D=\emptyset$ \Then \Return $\infty$ \hspace{0.5mm}\Rem fail\\  
        $e\Is D.\Id{deleteMin}$\\
        \If $D=\emptyset$ \Then\+\\
          refill $D$ from $M\cup I$\-\\
        \Return $e$
    \end{code}
    \end{minipage}
    \caption{\label{alg:bufferPeter}Pseudocode for inserting and
      deleting elements from an already locked sequential queue
      represented by the main queue $M$, an insertion buffer $I$ and a
      deletion buffer $D$.}
\end{figure}

To reduce the average number of cache lines accessed, we represent each SPQ by
the main queue $M$, an \emph{insertion buffer} $I$ and a \emph{deletion buffer}
$D$ that is organized as a sorted ring-buffer.  Each buffer has a fixed
capacity $b$.  We maintain the invariant that (unless the SPQ is empty) $D$
contains the smallest elements of $M\cup D\cup I$.  This implies that {\bf min}
of the SPQ is always the first element of $D$.

To maintain this invariant, {\bf inserting} an element $e$ into an SPQ first
checks whether $D$ is empty or contains a larger element.  If so, $e$ is
inserted into $D$. If $D$ was not full, this finishes the insertion.
Otherwise, $e$ becomes the largest element in $D$.  If $I$ is full, it is
flushed into the main queue. Finally, $e$ is inserted into $I$.  See
\cref{alg:bufferPeter} for high-level pseudocode.

{\bf DeleteMin} is straightforward.  When $D$ is not empty, its smallest
element is removed and returned.  If this empties $D$, it is refilled from
$M\cup I$.  A simple way to do this is to first flush $I$ into $M$ and then
extract the smallest $b$ elements from $M$ into $D$.  Alternatively,
we can first refill $D$ from $M$ only and then scan through $I$ to swap
elements smaller than $\max D$ with the largest elements from $D$ in a fashion
analogous to what is done in \Id{insertSPQ}.  This has the advantage that all
interactions between the buffers and $M$ is in the form of batches of
predictable size. This will be exploited in \cref{ss:batching}.

Buffering implies that the operations most often only access the buffers themselves
and the lock (accessing and modifying the buffers requires the SPQ to be
locked).  More specifically, an insertion \textquote{usually} reads the largest
element from $D$ and then operates on $I$ (assuming that inserted elements
rarely go to $D$). \Id{deleteMin} \textquote{usually} only accesses $D$.  When
buffers are flushed or refilled, a single thread performs a batch of operations
on a single queue and thus can exploit whatever locality the main queue
supports. In binary heaps for example, insertions exhibit high locality.
Deletions at least exhibit some locality near the root, at the variable
specifying the size of the queue, and at the rightmost end of the bottom layer
of the tree.

\subsection{Batching}\label{ss:batching}

To fully exploit that MultiQueues with buffering access the main queues in a
bulk fashion we can use SPQs that directly support batch operations.  In our
prototype, we considered \emph{merging binary heaps}.  These are structured
like binary heaps but each node contains $k$ sorted elements.  The heap
invariant remains that nodes contain elements that are no smaller than the
elements in the parent node. Insertion and deletion are also analogous to
ordinary binary heaps except that compare-and-swap operations are generalized
to merge-and-split.

On the one hand, merging binary heaps yield higher cache locality for
bulk operations than binary heaps since all elements in each tree node can be
stored consecutively and fewer nodes have to be accessed. On the other hand,
they come with additional algorithmic complexity and higher worst-case access
time.  They are also not easily combined with $a$-ary heaps that are a simpler
alternative to achieve somewhat higher locality than in binary heaps.

\subsection{Stickiness}\label{ss:sticky}
As a third measure to improve cache locality we introduce the concept
of \emph{stickiness}.  The stickiness parameter $s$ controls for how
many consecutive operations a thread $t$ reuses a particular local
queue for its \Id{insert} and \Id{deleteMin} operations. After
the stickiness period of one local queue ends for $t$ or if locking
the queue fails, $t$ randomly chooses another queue to stick to.  The
intuition behind this protocol is that for large enough $s$ and $c\geq
3$, the system converges to a state where threads use disjoint queues
most of the time.
Stickiness provides a simple mechanism to trade-off increased cache
efficiency at the cost of potentially higher rank errors and increased delay.

\subsection{Analysis}\label{ss:analysis}
In this section we analyze the theoretical performance of the MultiQueue under simplifying assumptions.

\subsubsection{Running Time}
We analyze the asymptotic running time of the operations \Id{insert} and
\Id{deleteMin} of the MultiQueue in a realistic asynchronous model of
shared memory computing where $k$ threads contending to write the same machine word
need time $\Oh{k}$ to perform those operations (e.g., the aCRQW
model \cite[Section~2.4.1]{SMDD19}).

\begin{theorem}
With $c>1$, the expected execution time of the operations \Id{insert} and
\Id{deleteMin} is $\Oh{1}$ plus the time for the sequential queue access.
\end{theorem}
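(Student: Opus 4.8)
The plan is to bound the expected cost of each operation by separately accounting for (i) the time spent in the \texttt{repeat} loop acquiring a lock, and (ii) the single sequential queue access that follows a successful lock. For part (ii) there is nothing to do — each operation performs at most one \Id{insertToSPQ} / \Id{deleteFromSPQ} call after leaving the loop, so that contributes exactly ``the time for the sequential queue access.'' All the work is in showing that part (i) is $\Oh{1}$ in expectation, and that the contention cost charged by the aCRQW model does not inflate this.

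\textbf{Bounding the number of loop iterations.} First I would argue that in any reachable configuration at most $p$ of the $cp$ queues are locked, since a queue is only ever locked by a thread currently executing inside its critical section and there are $p$ threads. Hence, conditioned on any adversarial schedule, when a thread draws $i\Is\Id{uniformRandom}(1..cp)$ the probability that $Q[i]$ is unlocked is at least $1-p/(cp) = 1-1/c > 0$ for $c>1$. For \Id{deleteMin} the relevant event is that the queue it actually tries to lock — the one with the smaller \Id{min} among two draws — is unlocked; this is implied by both draws hitting unlocked queues, which has probability at least $(1-1/c)^2 > 0$, still a positive constant. The subtlety here is that the set of locked queues can change between iterations and even between the two draws within an iteration, so I would phrase this as: regardless of the state at the moment the random choice is resolved, the success probability of that iteration is at least a constant $q_c>0$ depending only on $c$. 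The number of iterations is then stochastically dominated by a geometric random variable with parameter $q_c$, giving expected iteration count $\le 1/q_c = \Oh{1}$.

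\textbf{Bounding the cost per iteration under contention.} Each loop iteration does $\Oh{1}$ random draws, a comparison of two \Id{min} values (each a single read of the first deletion-buffer slot), and one ``try to lock'' — a single compare-and-swap on the lock word. Under the aCRQW model, if $k$ threads simultaneously attempt the CAS on the same lock word, each pays $\Oh{k}$. I would charge this cost amortized against the threads involved: at most $p$ threads exist, and I want the total extra work summed over all threads to stay linear. The clean way is a potential/charging argument — a thread contributing to the delay of another is itself making an attempt, so globally the quadratic-looking term is controlled by the $\Oh{1}$ expected number of attempts per operation times the (constant) expected concurrency. Concretely, since each thread makes $\Oh{1}$ expected attempts per operation, the expected number of threads concurrently hammering any one lock word at a given attempt is $\Oh{1}$ (here I rely on the usual assumption, standard for this model, that the $cp$ queues spread attempts so no single word is a persistent hot spot — exactly the point of having $c>1$), so the expected per-attempt cost is $\Oh{1}$ and the total expected lock-acquisition cost is $\Oh{1}$. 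Reading the two \Id{min} values is a pure read and in the CRQW model concurrent reads are free, so that adds nothing.

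\textbf{Expected main obstacle.} The geometric-domination step is routine; the real care is needed in the contention-cost step, because the aCRQW charge of $\Oh{k}$ for $k$ concurrent writers is in principle quadratic in $p$ if many threads pile onto the same lock. The argument must show that, in expectation, they do not: with $cp$ lock words and each thread making $\Oh{1}$ attempts before succeeding, the expected number of writers colliding on a fixed word during a fixed attempt is a constant, and this is precisely where $c>1$ is used a second time. I would make this rigorous either by linearity of expectation over which pairs of threads collide on which word, or by citing the standard analysis of this model for randomized lock acquisition; I expect the cleanest writeup to factor the running time as (expected attempts) $\times$ (expected cost per attempt) and bound each factor by a constant separately, noting that independence of the thread's own random choices from the adversarial schedule is what makes the product bound legitimate.
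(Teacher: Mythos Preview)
Your core argument --- at most $p$ locked queues out of $cp$, hence constant success probability per attempt, hence a geometrically bounded expected number of attempts --- is exactly the paper's. The paper's proof is in fact terser than yours: it uses the slightly sharper count of $p{-}1$ locked queues (the attempting thread itself holds none), applies the bound $1-(p-1)/(cp)\ge 1-1/c$ uniformly to both operations without separately treating the biased selection in \Id{deleteMin}, and does not explicitly analyze the $\Oh{k}$ write-contention charge of the aCRQW model at all. Your additional care on those two points is warranted and your sketches are sound; in particular, the contention argument can be completed by observing that any fixed queue is the target of another thread's attempt with probability at most $2/(cp)$ regardless of the data, so by linearity the expected number of colliding writers at the chosen lock word is $\Oh{1}$.
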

\begin{proof}
Whenever a thread attempts to lock a queue $q$, there are at most $p-1$ locked queues.
Hence the success probability is at least $$s\Is 1-\frac{p-1}{cp}\geq 1-\frac{1}{c} \in \Om{1}\punkt$$
Hence the expected number of attempts is
$$\sum_{i=1}^{\infty}is(1-s)^{i-1}=\frac{1}{s} \leq \frac{1}{1-\frac{1}{c}}=\frac{c}{c-1} \in \Oh{1}\punkt$$

The stickiness does not affect the expected number of attempts since it only
dictates the first queue to attempt to lock but not subsequent attempts in case
of failure.  The buffers have constant size, so all operations on them are in
$\Oh{1}$. The bulk-inserting of the insertion buffer into the queue and
refilling the deletion buffer amortize to the same asymptotic cost as accessing
a sequential queue for each element individually.
\end{proof}

Note that the above analysis even holds when threads are blocked: In the worst
case, each thread holds a lock. The overhead for another thread to avoid these
locks is already accounted for in the above proof. Hence all other threads are
guaranteed to make progress (in a probabilistic sense).  Thus MultiQueues are
probabilistically wait-free.

Using different sequential queues, we get the following bounds for the
comparison based model and for integer keys:
\begin{corollary}
MultiQueues with binary heaps need constant average insertion time and
expected time $\Oh{\log n}$ per operation for worst case operations
sequences.  For integer keys in the range $0..U$, using van Emde Boas
search trees \cite{Emde77,DKMS04} as local queues, time $\Oh{\log\log
  U}$ per operation is sufficient.
\end{corollary}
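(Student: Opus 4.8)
The plan is to obtain both bounds as immediate consequences of the theorem just proved: it already reduces the cost of \Id{insert} and \Id{deleteMin} to $\Oh{1}$ plus the cost of the underlying sequential-queue operations, and it also records that flushing the insertion buffer and refilling the deletion buffer amortize to the same per-element cost as operating on the sequential queue directly. So essentially all that remains is to substitute the known complexities of the two chosen SPQ implementations into that statement.

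For binary heaps I would argue as follows. In \Id{deleteMin} the two probes $Q[i].\Id{min}$ and $Q[j].\Id{min}$ read heap roots in $\Oh{1}$ time, and the actual deletion is a single sift-down costing $\Oh{\log n}$; an \Id{insert} is a single sift-up, also $\Oh{\log n}$ in the worst case. Combined with the $\Oh{1}$ overhead of the theorem this gives the claimed expected $\Oh{\log n}$ per operation for arbitrary (worst-case) operation sequences. The separate claim of \emph{constant average insertion time} relies on the classical fact that the expected number of sift-up steps incurred by an insertion into a binary heap is bounded by a constant when the keys arrive in random order; then a bulk flush of $b$ buffered elements costs $\Oh{b}$ in expectation, and the amortization statement of the theorem turns this into $\Oh{1}$ average time per \Id{insert}.

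For integer keys in the range $0..U$ with van Emde Boas trees as local queues the substitution is even more direct: such a tree supports $\Id{min}$ (hence the two minimum probes), $\Id{insert}$, and $\Id{deleteMin}$ (find-min followed by delete) in $\Oh{\log\log U}$ worst-case time each, so adding the $\Oh{1}$ overhead yields $\Oh{\log\log U}$ per operation, with bulk buffer flushes and refills again contributing only the same amortized per-element cost. The one point I would treat carefully is the constant-average-insertion claim for binary heaps, since — unlike everything else here — it is genuinely an average-case rather than a worst-case bound and depends on the random-order hypothesis for the sift-up analysis; one should also double-check that routing inserts through a bounded-size buffer and flushing them in batches preserves it, which it does because the number of elements in a batch that sift up more than a constant number of levels is still $\Oh{b}$ in expectation.
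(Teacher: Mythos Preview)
Your argument is correct and matches the paper's approach: the paper states the corollary without proof, treating both bounds as immediate from the preceding theorem by plugging in the standard sequential complexities of binary heaps and van Emde Boas trees. Your explicit flag that the constant-average-insertion claim for binary heaps is an average-case (random-order) result rather than a worst-case one is appropriate and in fact more careful than the paper, which leaves this implicit.
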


\subsubsection{Quality}

Quality analysis is still a partially open problem.  However,
we will explain how \emph{rank errors} and \emph{delays} can be estimated under
simplified but intuitive assumptions in the absence of stickiness.

Let us assume that all $m$ current elements have been allocated uniformly at
random to the local queues.  This assumption holds if there have been no
\Id{deleteMin} operations so far and no locking attempt during \Id{insert}
failed. It is an open question whether lock contention and \Id{deleteMin}
operations invalidate this assumption in practice. Furthermore, let us assume
that we choose the queues to look at for deletion randomly and no queue is
currently locked.

\paragraph*{Rank Errors}
With these assumptions, the probability to delete an element $e$ of rank $i$ is
\[ \mathsf{P}(\mathrm{rank}=i) =\left(1-\frac{2}{cp}\right)^{i-1}\frac{2}{cp} \]
The first factor expresses that the $i-1$ elements with smaller ranks are not
present at the two chosen queues. The second factor is the probability that the
element with rank $i$ \emph{is present}.  Therefore, the expected rank error in
the described situation is
\begin{equation}\label{eq:eRank}
  \sum_{i=1}^mi\mathsf{P}(\mathrm{rank}=i)\leq
  \sum_{i\geq 1}i\left(1-\frac{2}{cp}\right)^{i-1}\frac{2}{cp}=\frac{c}{2}p \in \Oh{p}\punkt
\end{equation}
The cumulative probability that the rank of $e$
is larger than $k$ is
\begin{equation}\label{eq:tail}
  \mathsf{P}(\mathrm{rank}>k)=\left(1-\frac{2}{cp}\right)^k\komma
\end{equation}
as none of the elements with a rank smaller than $k$ must be present on the two
chosen queues.
$\mathsf{P}(\mathrm{rank}>k)$ drops to $p^{-a}$ for $k=\frac{ca}{2}p\ln p$,
i.e., with probability polynomially large in $p$, we have a rank error in
$\Oh{p\log p}$.


We can also give qualitative arguments on how the performed operations change
the distribution of the elements. Insertions are random and hence move the
system towards a random distribution of elements. Deletions tend to remove more
elements from queues with small keys than from queues with large keys, thus
stabilizing the system.  Alistarh et al.  \cite{alistarh2017power,ABKLN18}
confirm our conjecture for a slightly simplified process and also show that it
suffices to only \textquote{sometimes} look at more than one key.  On the other
hand, they show that the rank error grows in an unbounded fashion if always
only a single queue is considered for deletion.

\paragraph*{Delay}
The delay of an element $e$ with rank $i$ is equal to the number of deletions of
elements with rank higher than $i$. Let $\mathsf{D}$ denote the event that a
particular \Id{deleteMin} operation delays or removes element $e$.  Let
$\mathsf{R}$ denote the event that $e$ is removed.
We now compute the conditional probability that $e$ is deleted given that
it is delayed or deleted by the \Id{deleteMin} operation.
By the definition of
conditional probability, we have
$\mathsf{P}\,(\mathsf{R} \mid \mathsf{D}) = \frac{\mathsf{P}(\mathsf{R} \cap \mathsf{D})}{\mathsf{P}(D)}$.

With $\mathsf{P}(\mathsf{R} \cap \mathsf{D}) = \mathsf{P}(\mathrm{rank} = i) =  \left(1-\frac{2}{cp}\right)^{i-1} \frac{2}{cp}$ 
and 
$\mathsf{P}(\mathsf{D}) = \mathsf{P}(\mathrm{rank} \geq i) = \left(1-\frac{2}{cp}\right)^{i-1}$,
we get $\mathsf{P}(\mathsf{R} \mid
\mathsf{D}) = \frac{2}{cp}$.
Since $\mathsf{P}(\mathsf{R} \mid \mathsf{D})$ is constant, the delay follows a
geometric distribution with an expected value of $\frac{cp}{2}$ and
values in $\Oh{p\log p}$ with high probability.

\subsection{Implementation Details}\label{ss:implementation}
\begin{figure}
  \centering
  \includegraphics[scale=0.79]{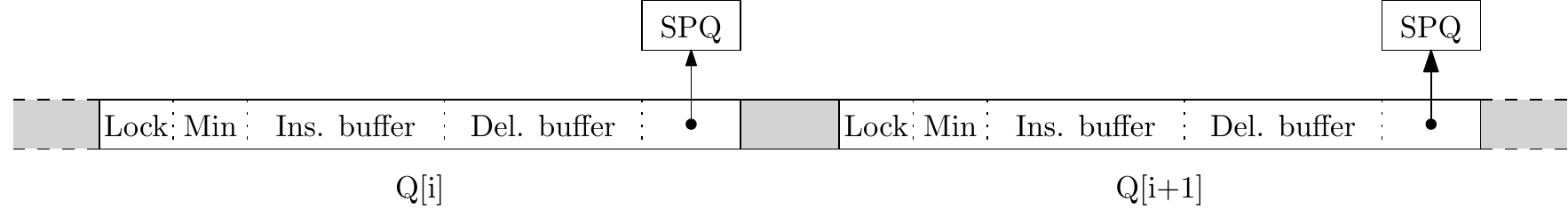}
  \caption{\label{fig:array} The array containing the lock, the
    buffers as well the pointer to the SPQ in each entry. The padding blocks
    (grey) prevent false sharing of neighboring entries. }
\end{figure}
We first give a detailed description and reasoning of the data structures
used in our implementation of the MultiQueue as described above.  We then show
implementations of the \Id{insert} and \Id{deleteMin} operations.

At its core, a MultiQueue is an array $\mathtt{Q}$ with one entry for each
local queue.  Each entry contains a lock, the insertion and deletion buffers,
and a pointer to the main queue itself. Moreover, the key of the minimum element in
the deletion buffer is redundantly stored next to the lock. Comparing the
minima of two local queues thus only involves atomic access to this key and
does not access the deletion buffer, which would require locking. Note
that this technique allows for slight inaccuracies, as the minimum in the
deletion buffer could be different from the explicitly stored minimum key if
another thread deletes the minimum from the deletion buffer during comparison.
However, consistency is not impacted by this.  The entries are padded and
aligned to \emph{cache lines} to prevent false sharing of neighboring entries.
On systems with nonuniform memory access (NUMA) the padding is extended to
\emph{virtual memory pages} that are distributed round-robin over the
NUMA-nodes. This balances memory traffic over the nodes.  See
Figure~\ref{fig:array} for an illustration.  Additionally, each thread stores
local data such as stickiness counters. False sharing between these data
structures is also avoided.


The insertion buffer is implemented as  a fixed array of size $b$, preceded by
a size counter. The deletion buffer must support efficient removal of the first
element, lookup of the last element, and insertions at arbitrary positions.  We
chose a ring buffer that stores its size and the index of the first element
upfront as the underlying data structure.  Ring buffers support random lookup and
removing the first element in constant time, while inserts at arbitrary
positions take at most $b/2$ element moves. We implement both $8$-ary heaps and
$k$-merging binary heaps as SPQs, where $k$ is a tuning parameter.  $8$-ary
heaps provide the same theoretical guarantees as binary heaps but better cache
locality.  To achieve stable worst-case access times, our implementation
allocates enough memory so that array resizes by the local queues are unlikely.
Having stable worst-case access times is relevant for MultiQueues since an
operation that exceptionally takes very long would lock a queue for a long
time.  If this queue also contains elements of low rank, their delay as well as
the rank error of elements deleted in the meantime could become large. On
machines with NUMA, the memory for each SPQ is allocated on the same NUMA node.

While our MultiQueue implementation can handle arbitrary element types, the
implementation is designed and optimized with elements of small size in mind.

\section{Experiments}\label{exp}

We first compare our theoretical analysis of rank errors and delays
with experimental results. We then perform parameter tuning and examine
different implementation variants of the MultiQueue. Afterward, we compare the
MultiQueue with other concurrent priority queues in terms of quality,
throughput, and scalability.  For these experiments, each thread repeatedly
either deletes an element from the queue or inserts a new one. This benchmark
provides good insights on the maximum throughput and quality of the queues
under high contention.

Queue elements are key-value pairs consisting of two
$32$-bit unsigned integers, where the key determines the element's priority.
The queue is filled with $n_0$ elements prior to the measurement. Unless
otherwise noted, $n_0 = 10^6$, the inserted elements are uniformly
distributed in $[0,\ldots, 2^{32} - 1]$, and the rank errors and delays are
measured over the course of $10^7$ \Id{deleteMin} operations.  To conclude this
chapter, we perform a parallel single-source shortest-path (SSSP) benchmark to
shed light on the performance under more realistic settings.

Experiments were conducted on two machines: Machine $A$ utilizes an AMD EPYC™
7702P 64-core processor. Each core runs at \SI{2.0}{\GHz} and supports two
hardware threads. The system runs on Ubuntu 20.04 with Linux kernel version
5.4.0.  Machine $B$ is a dual socket system with an Intel® Xeon® Platinum 8368
Processor with \SI{2.4}{\GHz} on each socket, yielding $2 \times 38$ cores and
$152$ available hardware threads. The system runs on SUSE Linux Enterprise
Server 15 with Linux kernel version 5.3.18.  The experiments in
Section~\ref{sec:param} were performed on machine $A$, for the comparison
experiments in Section~\ref{sec:compare} we additionally used machine $B$.  The
implementation is written in
{C\nolinebreak[4]\hspace{-.05em}\raisebox{.4ex}{\tiny\bf ++}17} and compiled
with GCC 10.2.0 with optimization level \texttt{-O3}. We use pthreads for
thread management and synchronization. Each thread is pinned to a hardware
thread.

\subsection{Measuring Rank Error and Delay of Relaxed PQs}\label{sec:errors}
Measuring the rank errors and delays in real-time imposes the practical problem
that we need to know which elements are in the queue at the time of each
\Id{deleteMin} operation. We approach this problem as follows. Each thread logs
its operations into a preallocated local vector.  The log entries get
timestamps obtained using a low-overhead high-resolution clock (we currently
use the Posix \texttt{CLOCK\_REALTIME}).
For the evaluation, we merge these logs to one global sequence $S$ of
operations (sorted by time-stamp).  $S$ is then sequentially replayed using an
augmented {B\nolinebreak[4]\hspace{-.05em}\raisebox{.4ex}{\tiny\bf +}} tree
(based on a \texttt{tlx::btree\_map} from the \texttt{tlx} library~\cite{TLX})
whose leaves are the current queue elements (sorted by key).  By augmenting
nodes with their size, this allows determining ranks of deleted elements (and
thus rank errors) in logarithmic time (e.g., \cite[Section~7.5]{SMDD19}).  We
further augment the interior nodes with delay counters $d_v$ and maintain the
invariant that the delay of a leaf $e$ is the sum of the delay counters on the
root--$e$ path. When inserting an element, we can establish the invariant by
setting $d_e$ to $-\sum_vd_v$ for $v$ on the root--$e$ path. When performing
balancing operations on the tree, we can maintain the invariant by pushing the
delay counters of the manipulated nodes downward to unchanged subtrees.

\subsection{MultiQueue Parameter Tuning}\label{sec:param}
The baseline MultiQueue uses $8$-ary heaps as SDQs and does not use stickiness
or buffers.
\begin{figure}
  \centering
  \includegraphics[width=\textwidth]{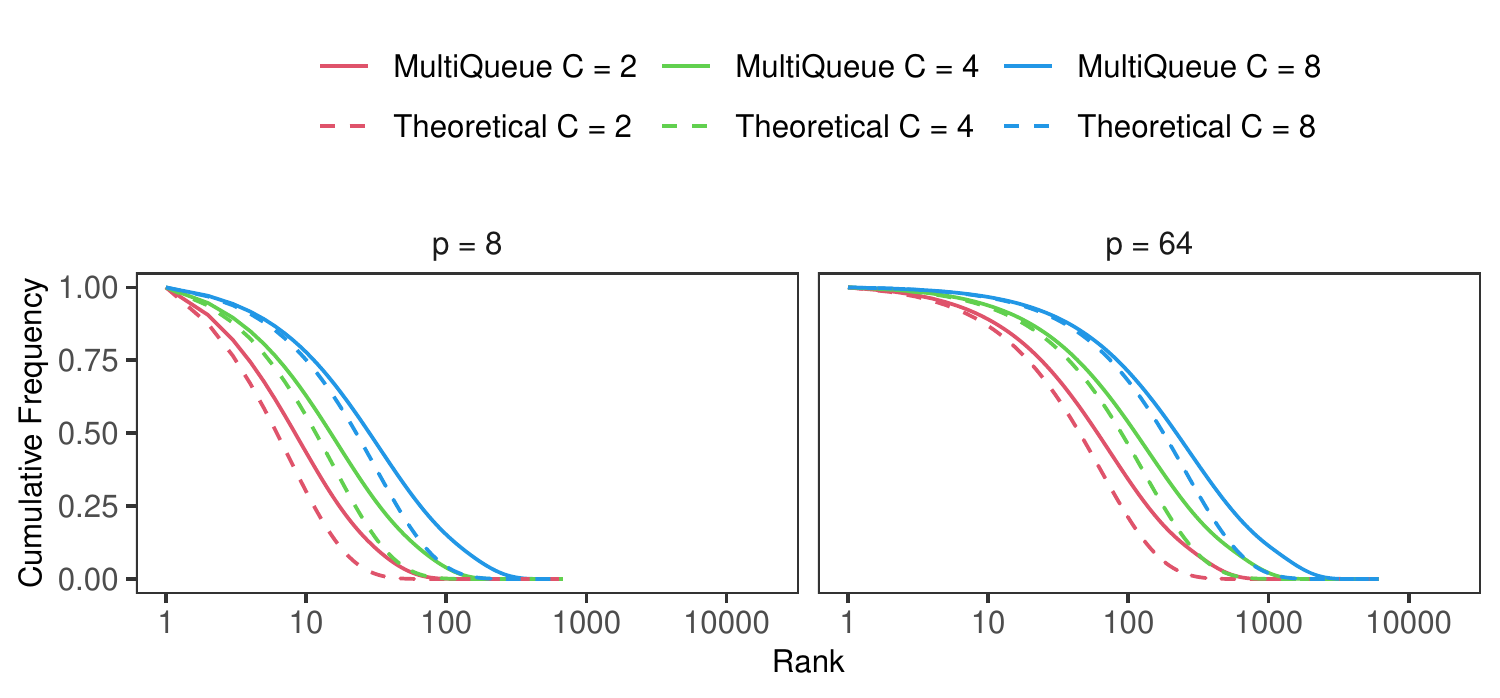}
  \caption{\label{fig:rank_theory}The rank error distribution compared to the
  distribution from the theoretical analysis for $p=8$ and $p=64$, respectively.}
\end{figure}
As seen in Figure~\ref{fig:rank_theory}, the measured rank errors follow the
predicted distribution closely independent of the number of threads and values
for $c$.  However, we cannot quite match the prediction in practice.%
\footnote{The discrepancy is not due to locking since it persists when threads
wait for a long time after each operation. Other sources could be noise in time
measurements or inaccuracies due to our random data distribution assumption.}
We now inspect the impact of various parameter configurations for the
MultiQueue on its performance.  In particular, we optimize the buffer sizes and
explore different values for $c$ and the stickiness.  The throughputs are
reported for a running time of \SI{3}{\s} and averaged over $5$ runs.

\subparagraph*{Buffer Size.}
\begin{figure}
  \begin{minipage}[t]{0.5\textwidth}
  \centering
  \includegraphics[width=\textwidth]{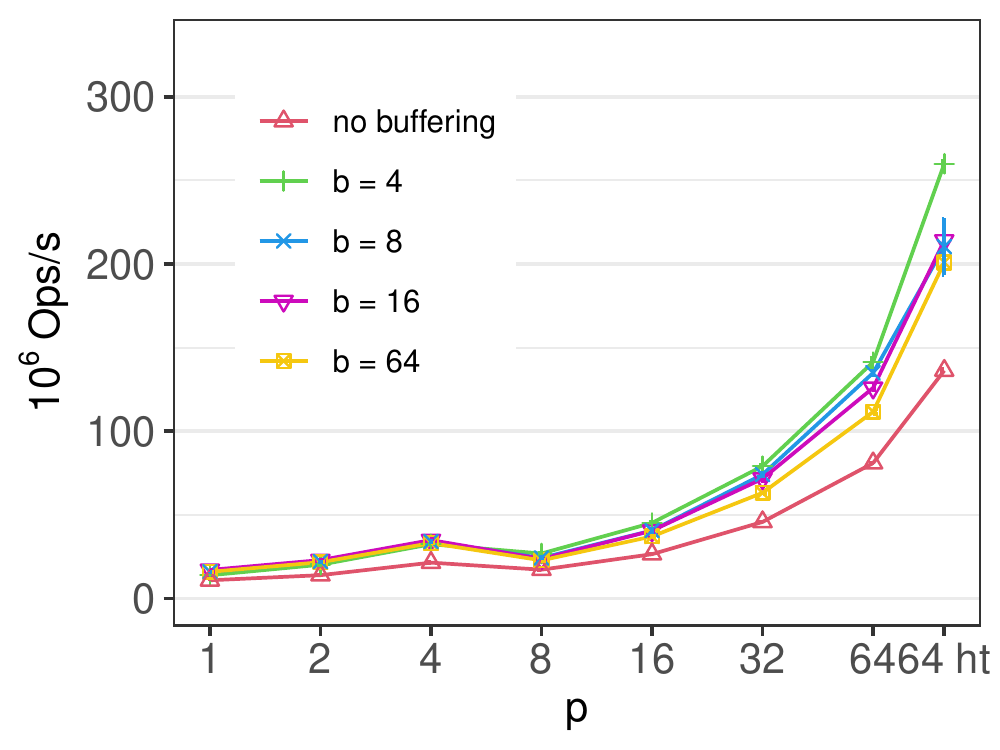}
\end{minipage}
  \begin{minipage}[t]{0.5\textwidth}
  \centering
  \includegraphics[width=\textwidth]{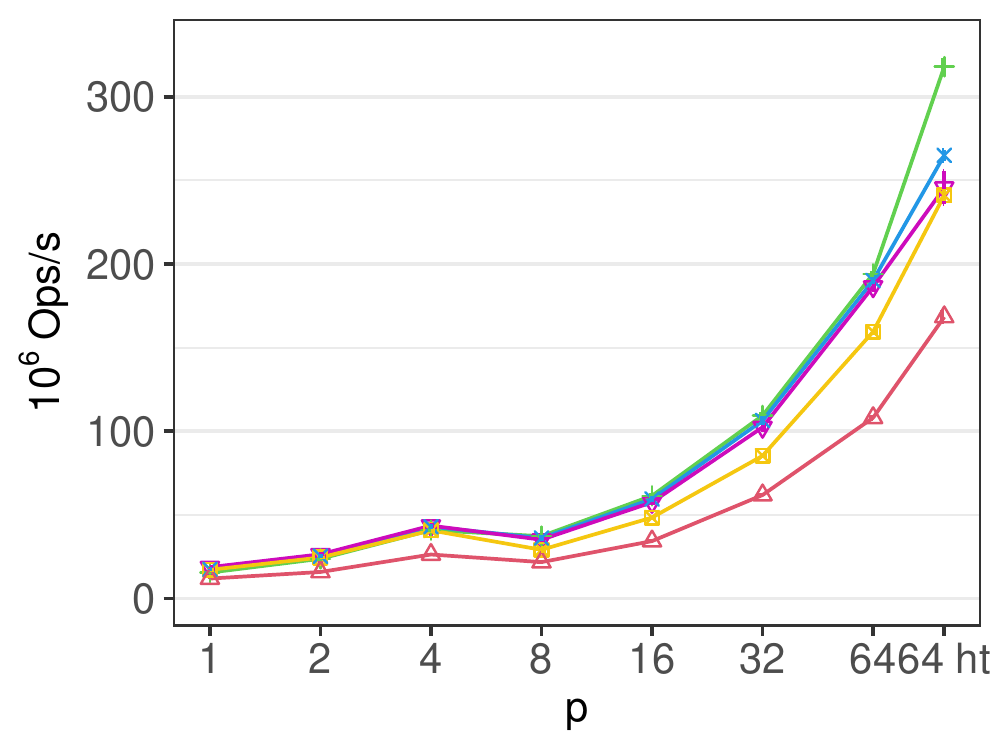}
\end{minipage}
  \caption{\label{fig:throughput_buffer_size}The impact of different buffer
  sizes to the throughput. The left plot is for $c=2$, the right one for $c=4$.
The suffix "ht" denotes active hyper-threading.}
\end{figure}
As Figure~\ref{fig:throughput_buffer_size} shows, buffering increases the
throughput of the MultiQueue considerably. However, the buffer size has to be
chosen carefully, as too large buffers hamper the performance. This is due to
the overhead of inserting elements into the deletion buffer. The difference of
buffer sizes $4$, $8$, and $16$ is negligible until hyper-threading is active, where a
buffer size of $4$ gains most. However, the larger the buffers, the less
frequent we have to access the SPQ to perform bulk operations. Accessing the
SPQ likely generates cache misses, which are especially expensive with NUMA if
the SPQ is located on another NUMA node. We therefore use a buffer size of
$b =16$ for further experiments. We have observed no significant differences in
rank errors and delays with or without buffering.

\subparagraph*{Stickiness and Number of Queues.}
\begin{table}
\centering
\begin{tabular}{rrrrr}
  \hline
  c & s & Throughput (\si{\textrm{MOps}\per\second}) & Avg. rank error & Avg. delay \\ 
  \hline
  \textbf{2} & \textbf{1} & 125.9 & (64) 103.1 & 103.1 \\ 
   & 4 & 215.5 & 413.2 & 412.5 \\ 
   & 8 & 271.6 & 900.2 & 896.0 \\ 
   & 16 & 322.5 & 1853.8 & 1827.6 \\ 
   & 64 & 411.9 & 7443.5 & 7183.2 \\ 
  \hline
  \textbf{4} & \textbf{1} & 186.5 & (128) 202.6 & 202.6 \\ 
             & \textbf{4} & 327.3 & 814.2 & 812.6 \\ 
   & 8 & 405.3 & 1685.4 & 1678.0 \\ 
   & 16 & 439.6 & 3334.9 & 3303.4 \\ 
   & 64 & 535.3 & 13349.6 & 12870.9 \\ 
   \hline
  8 & 1 & 222.5 & (256) 405.3 & 405.1 \\ 
   & 4 & 408.8 & 1598.6 & 1592.2 \\ 
   & \textbf{8} & 488.0 & 3284.9 & 3259.5 \\ 
   & 16 & 497.6 & 6608.4 & 6511.9 \\ 
   & 64 & 579.8 & 26207.6 & 24902.0 \\ 
   \hline
  16 & 1 & 244.1 & (512) 811.7 & 810.9 \\ 
  & 4 & 461.9 & 3216.8 & 3196.1 \\ 
  & \textbf{8} & 542.0 & 6415.2 & 6338.3 \\ 
  & 16 & 529.3 & 12814.4 & 12547.8 \\ 
  & 64 & 603.3 & 50854.5 & 47695.3 \\ 
   \hline
\end{tabular}

\caption{\label{fig:c_and_s} The throughput, average rank error and delay for
$p = 64$. The number in parentheses indicates the expected rank error according
to the theoretical analysis.}
\end{table}
\Cref{fig:c_and_s} shows measurements that suggest that rank errors and delays
are not just linear in the number of queues $cp$ but also in the stickiness $s$.
While rank errors and delays of MultiQueues are always very similar, this is
not generally true.  We use the configurations
$(c,s)\in\set{(4,1),(4,4),(8,8),(16,8)}$ for further benchmarks, as they
provide interesting trade-offs between throughput and quality.

\subparagraph*{$k$-Merging Heap.}
\begin{figure}[h]
  \centering
  \includegraphics[scale=0.7]{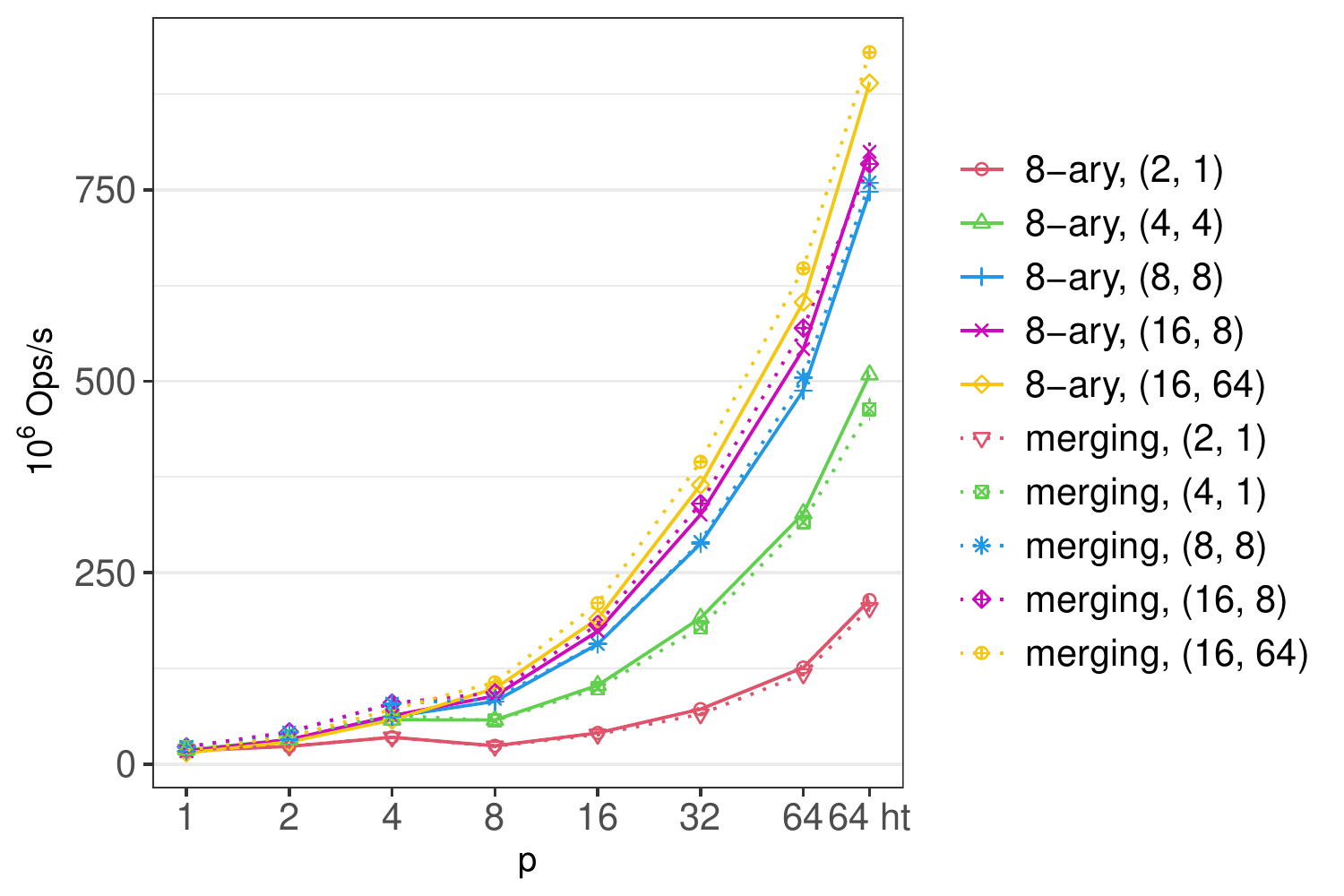}
  \caption{\label{fig:throughput_merging}Throughput comparison using $8$-ary
  heaps and $k$-merging heaps with $k = 16$. The numbers in parentheses are
$(c,s)$-pairs. The suffix "ht" denotes active hyper-threading.}
\end{figure}
Our experiments with $k$-binary merging heaps instead of $8$-ary heaps for the
SPQs indicated that merging heaps can improve the throughput for high
values of $s$ compared to $8$-ary heaps. However, the impact is moderate (see
\cref{fig:throughput_merging}) in most cases and we stick to $8$-ary heaps.
Our experiments further showed that using merging heaps has no impact on the
quality of the priority queue.

\subsection{Comparison with Other Approaches}\label{sec:compare}
\begin{figure}[b]
  \begin{minipage}[t]{0.46\textwidth}
    \centering
    \includegraphics[scale=0.65]{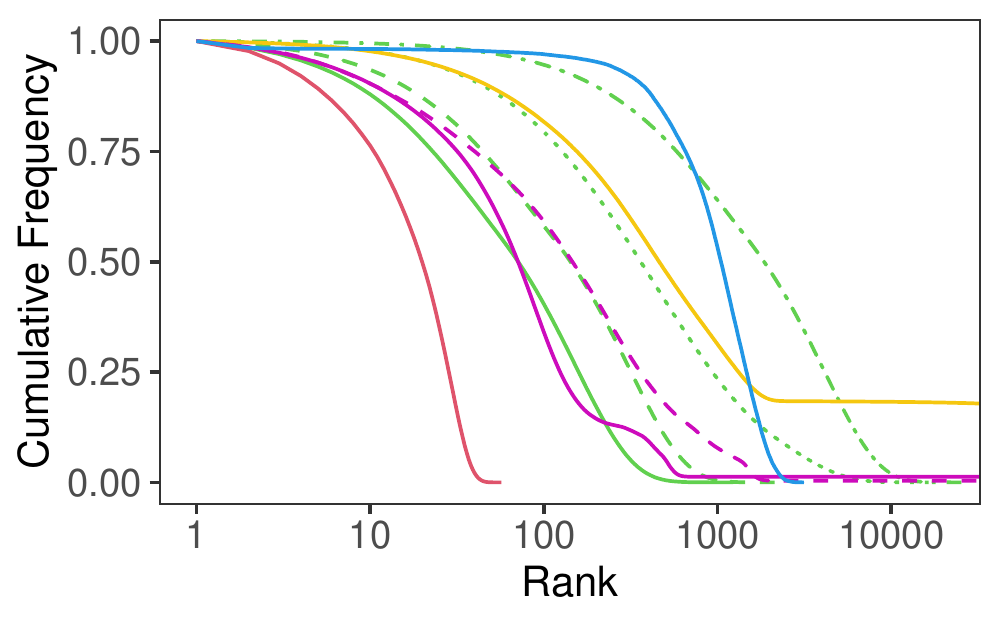}
  \end{minipage}
  \begin{minipage}[t]{0.54\textwidth}
    \centering
    \includegraphics[scale=0.65]{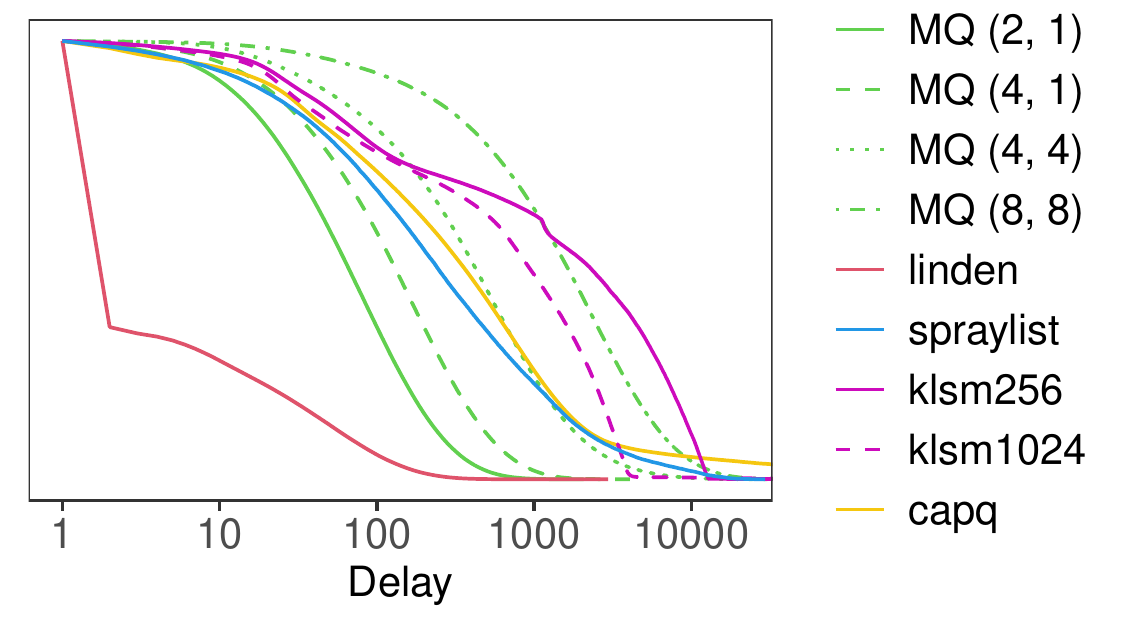}
  \end{minipage}
  \caption{\label{fig:quality_compare}Comparison of rank error and
  delay distribution for different priority queues for $10^6$ \Id{deleteMin} operations, $p=64$.}
\end{figure}
We compare the MultiQueue to the following state-of-the-art concurrent priority queues.
\begin{description}
  \item [Linden~\cite{LinJon13}] is a priority queue based on skip lists by Lindén and Jonsson. It only returns suboptimal elements in case of contention on the list head.
  \item [Spraylist~\cite{AKLS14}] relaxes skip lists by deleting elements close to the list head to reduce contention.
  \item [$k$-LSM~\cite{wimmer2015lock}] combines thread-local priority queues with a relaxed shared priority queue component.
    We use the variants with $k=256$ (\textbf{klsm256}) and $k=1024$ (\textbf{klsm1024}).
  \item [CAPQ~\cite{sagonas2016contention}] dynamically detects contention to
    switch from using a shared skip list based priority queue to thread-local buffers.
\end{description}
We used the implementation found in the Github repository of the
$k$-LSM\footnote{\url{https://github.com/klsmpq/klsm}} for all of the above
priority queues.

Figure~\ref{fig:quality_compare} shows the rank error distributions and delays
for $64$ threads. Unsurprisingly, the Linden queue yields by far the best
quality.  The $k$-LSM achieves low rank error but high delay.  For the
Spraylist it is the other way around as it has high rank error but low delay.
Multiqueues have similar delays and rank errors. With small values for $c$ and
$s$ they are more accurate than all the competitors except for the Linden
queue.  The CAPQ exhibits quality comparable  to the MultiQueue with the rather
loose setting $c=s=4$.

\begin{figure}
  \begin{minipage}[t]{0.5\textwidth}
  \centering
  \includegraphics[width=\textwidth]{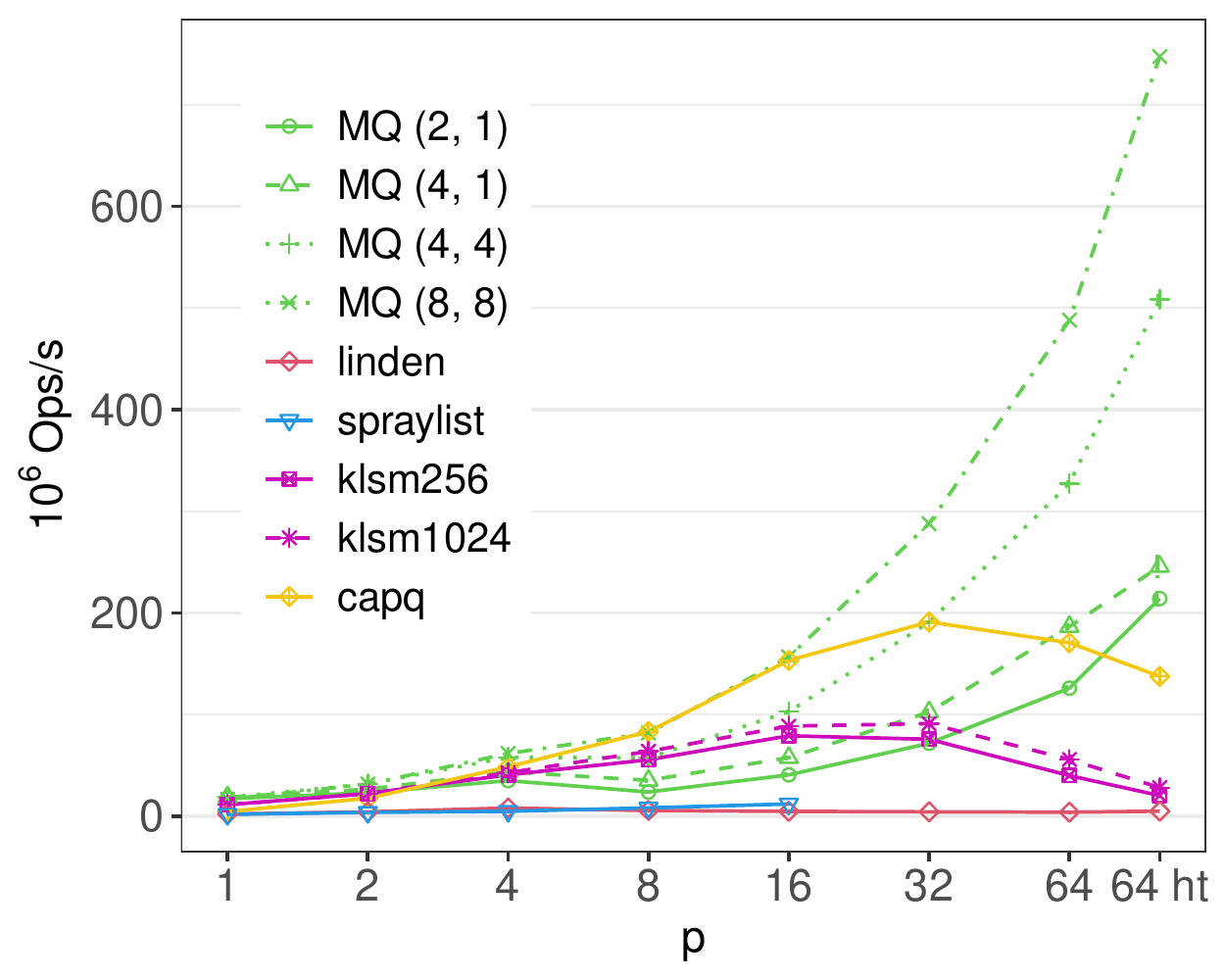}
\end{minipage}
  \begin{minipage}[t]{0.5\textwidth}
  \centering
  \includegraphics[width=\textwidth]{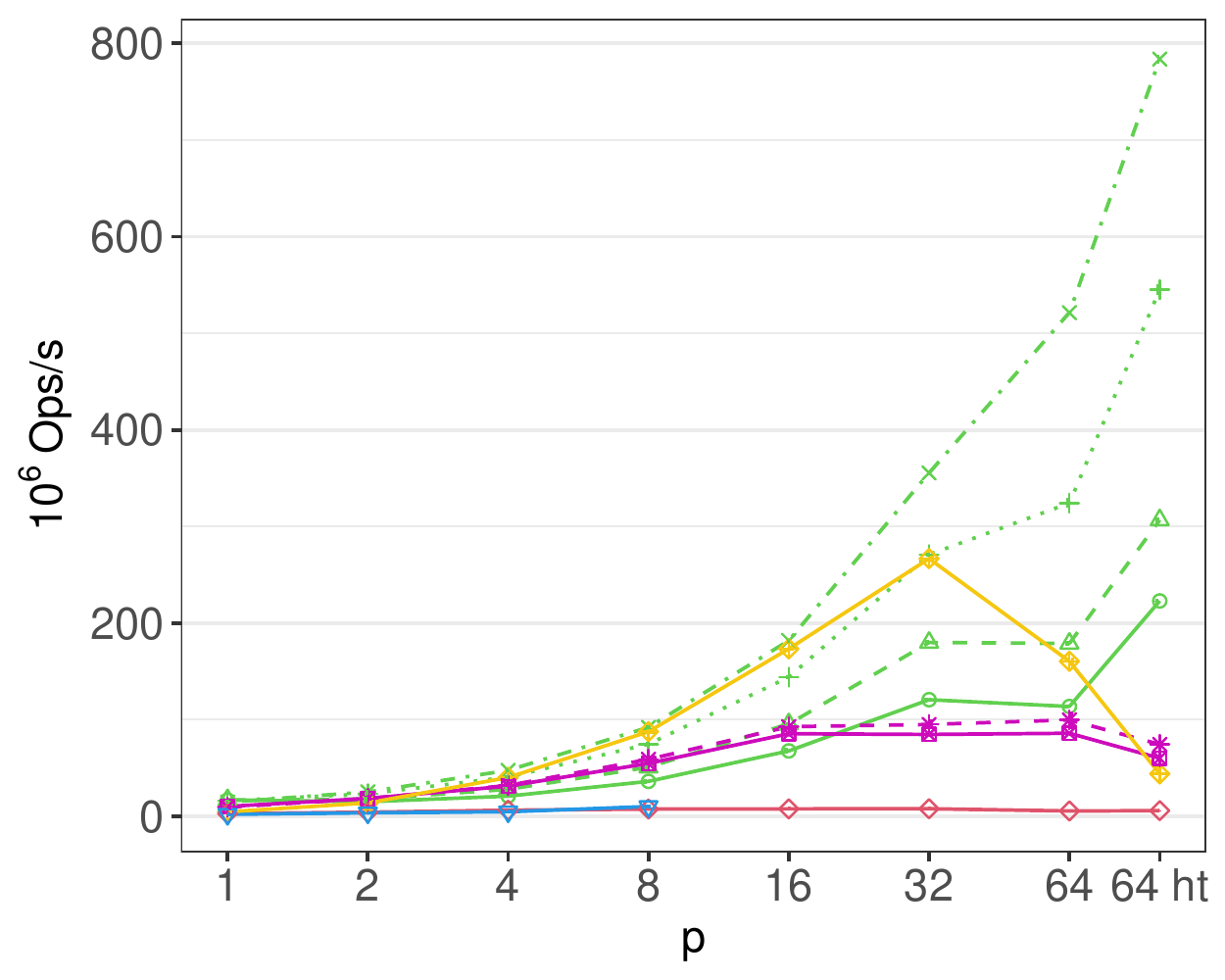}
\end{minipage}
  \caption{\label{fig:throughput_compare}Throughput comparison between the MultiQueue and other implementations. On the left is machine $A$, and on the right machine $B$.}
\end{figure}
As seen in Figure~\ref{fig:throughput_compare}, the CAPQ is among the fastest
priority priority queues for up to $32$ threads.  Beyond that, it accesses the
centralized queue so often that contention deteriorates performance.  This
could probably be remedied with different settings of the tuning parameters but
only with a commensurate effect on even higher rank errors and delays.  The
very loose MultiQueue with $c=s=8$ exhibits very good scalability.  Even the
higher-quality variants can take advantage of all hardware threads and
eventually outperform the CAPQ.  The $k$-LSM scales up to around $16$ threads
while Linden and Spraylist scale very poorly even on very few threads. The
authors of the $k$-LSM show that the scalability of the $k$-LSM can be improved
by selecting higher values for $k$ such as $4096$ at the cost of lower quality
\cite{gruber2016benchmarking}.  Machines $A$ and $B$ have similar performance
with $B$ having a slight advantage when all threads are used but suffering from
the switch from one to two sockets.  Unfortunately, the Spraylist crashed in
our setup on both machines with higher thread count, so we had to exclude it
from the SSSP benchmark below.

\begin{figure}
  \begin{minipage}[t]{0.5\textwidth}
  \centering
  \includegraphics[width=\textwidth]{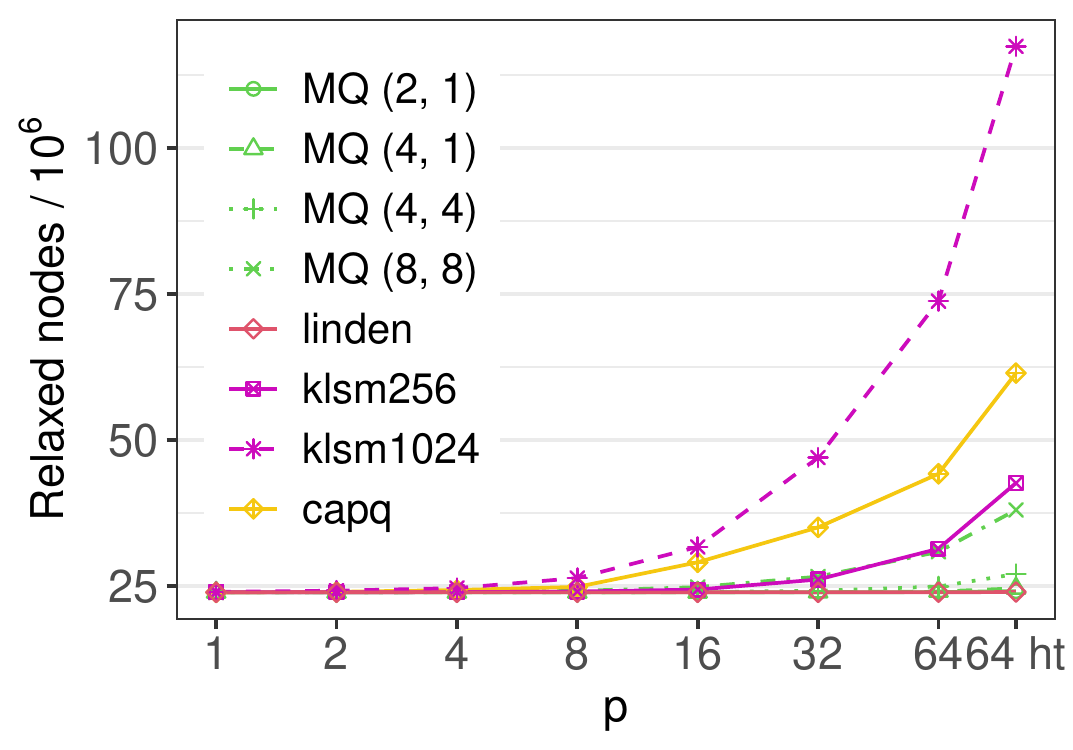}
\end{minipage}
  \begin{minipage}[t]{0.5\textwidth}
  \centering
  \includegraphics[width=\textwidth]{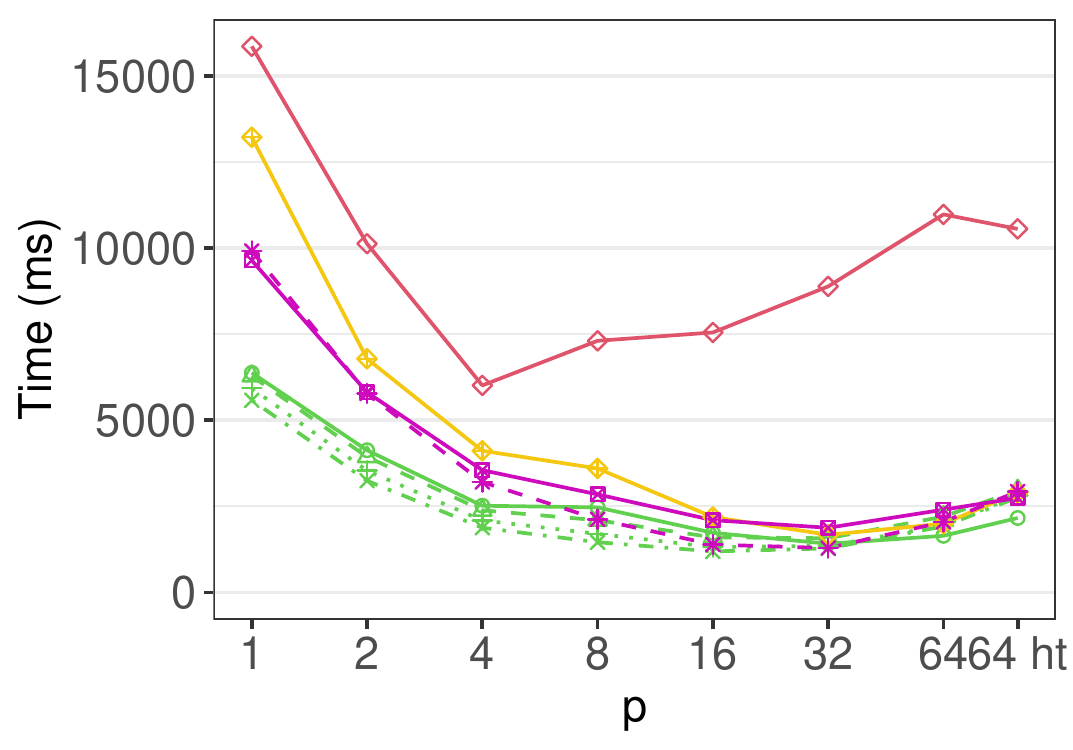}
\end{minipage}
\caption{\label{fig:sssp_usa} The SSSP benchmark on the \emph{USA} graph running on machine $A$.}
\end{figure}
\begin{figure}
  \begin{minipage}[t]{0.5\textwidth}
  \centering
  \includegraphics[width=\textwidth]{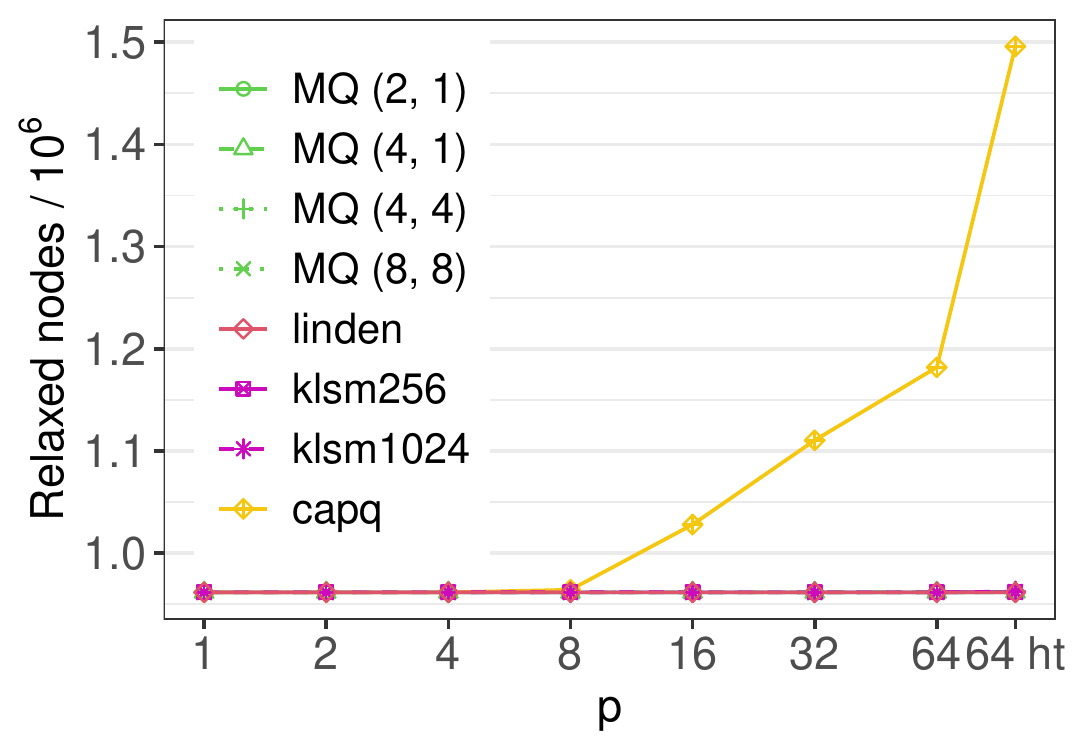}
\end{minipage}
  \begin{minipage}[t]{0.5\textwidth}
  \centering
  \includegraphics[width=\textwidth]{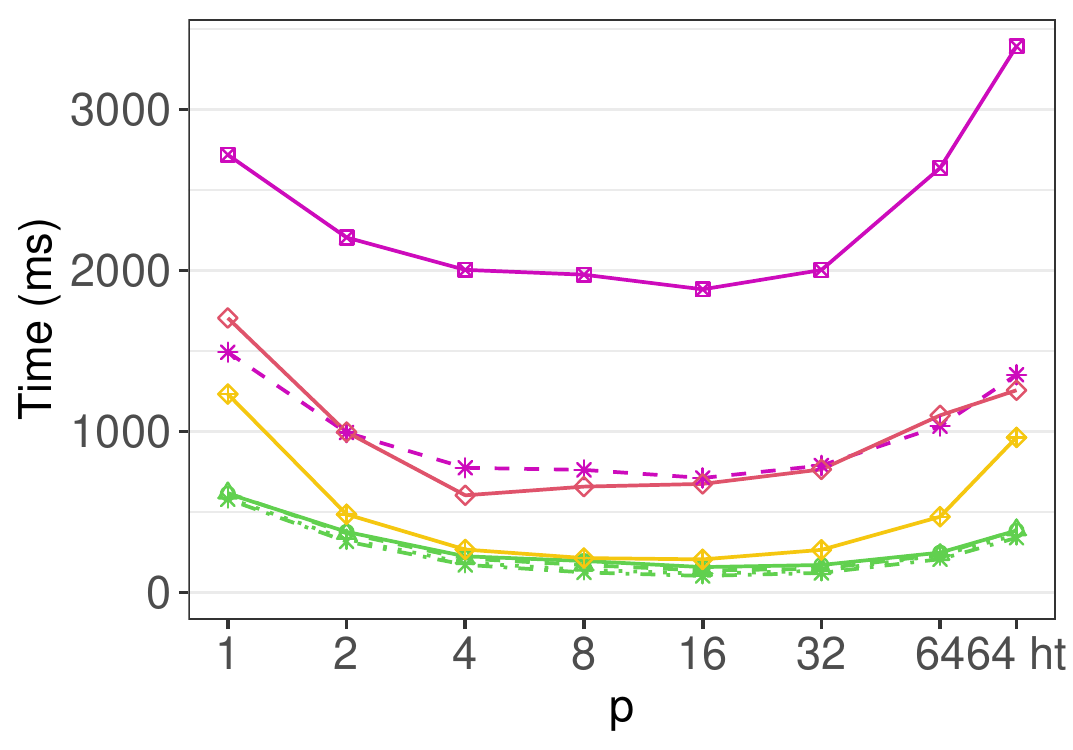}
\end{minipage}
\caption{\label{fig:sssp_rhg_20}The SSSP benchmark on a random hyperbolic graph with $2^{20}$ nodes, an average degree of $16$ and $\gamma = 2.3$ running on machine $A$.}
\end{figure}

The parallel SSSP benchmark uses Dijkstra's algorithm to calculate the shortest
paths from one node to all other nodes in a weighted graph. Dijkstra's
algorithm has to be adapted to be used in a parallel setting with relaxed
priority queues. Sagonas and Winblad  use a similar benchmark for the CAPQ and
describe the algorithm in greater detail \cite{sagonas2016contention}. The
algorithm terminates as soon as the queue is empty.  To detect when the queue
is empty, they use the property of the CAPQ and $k$-LSM that if \Id{deleteMin}
fails for all threads some time after the last insertion happened, the queue
must be empty. This does not hold for MultiQueues, thus we cannot rely solely
on \Id{deleteMin} to guarantee correct termination.  Instead, we use a dedicated
emptiness detection routine, where a thread checks $c$ designated local queues
for emptiness if it thinks that the queue is empty. The MultiQueue is empty if
all threads have successfully completed this emptiness check some time after
the last insertion has happened.
We report the time to solve the SSSP problem for different thread counts as
well as the number of nodes that were extracted from the queue and then
relaxed. We used real road networks and artificial
random hyperbolic graphs (rhg) as benchmark instances.  The road network
\emph{USA}\footnote{\url{http://users.diag.uniroma1.it/challenge9/download.shtml}}
has about $24$m nodes and $58$m edges.  The road network
\emph{GER}\footnote{\url{https://i11www.iti.kit.edu/resources/roadgraphs.php}}
has about $20$m nodes and $42$m edges.  The weights on these graphs represent
the travel time. We used a modified version of the KaGen
framework~\cite{funke2017communication} to generate random hyperbolic graphs
with $2^{20}$ and $2^{22}$ nodes and the geometric distances as edge weights.
Figure~\ref{fig:sssp_usa} and~\ref{fig:sssp_rhg_20} give
exemplary results for the \emph{USA} road network and the rhg with $2^{20}$
nodes on machine $A$.
On the road map graphs, all the relaxed PQs considerably outperform the Linden
queue.  However, none of them scales particularly well beyond 16 threads with
the MultiQueue variants performing best.  Despite having considerably better
rank errors than the CAPQ, the klsm1024 leads the algorithm to processes many
more nodes on the \emph{USA} graph. The $k$-LSM variants have very high delays,
which indicates that the delay is a distinctive metric to measure the quality
of relaxed priority queues.  Rhgs paint a different picture: While only the
CAPQ leads to nodes being processed more than once, both $k$-LSM variants are
noticeably slower than both the CAPQ and MultiQueues.  MultiQueues lead to very
low overhead considering the extracted nodes in all our benchmarks and are very
competitive at the same time.  Especially in algorithms where processing the
individual elements is expensive, this could be a decisive factor.

\section{Conclusions and Future Work}\label{concl}

MultiQueues are a simple and efficient approach to relaxed concurrent
priority queues.  They allow a transparent trade-off between throughput
and quality and considerably outperform previous approaches in at
least one of these aspects.  An important open problem is to complete
the theoretical analysis to encompass stickiness.  We believe that
further practical improvements could be possible by better avoiding
conflicting queue accesses of the sticky variant even when few queues
are used.  It would also be interesting to make MultiQueues contention
aware to achieve higher quality and more graceful degradation than the
simple binary switch between local and global access used in the
CAPQ~\cite{sagonas2016contention}.

A conceptual contribution of our paper is in introducing the quality
measure of \emph{delay} as a complement to \emph{rank error}.   We give evidence
that this is important for the actual performance of applications such
as shortest path search and explain how to measure it efficiently.
 We are
currently having a closer look at using CPQs for branch-and-bound
where it seems that the parallel performance of the applications is
provably tied to the delays incurred by the CPQ.
Beyond priority queues, it would be interesting to see whether our
approach to \textquote{wait-free} locking can be used for other
applications, e.g., for FIFOs.

\clearpage
\bibliography{diss,references}

\clearpage
\appendix

\section{More Experiments}

\subsection{SSSP Benchmark}\label{ss:sssp}

\begin{figure}[h]
  \begin{minipage}[t]{0.5\textwidth}
    \centering
    \includegraphics[width=\textwidth]{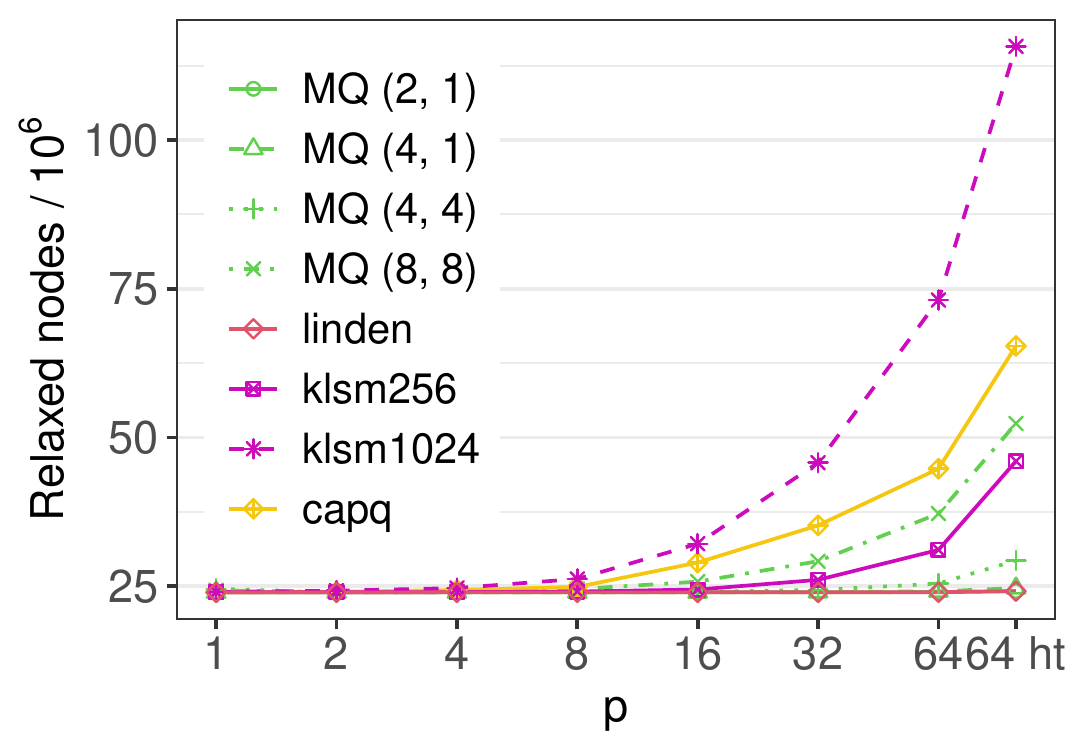}
  \end{minipage}
  \begin{minipage}[t]{0.5\textwidth}
    \centering
    \includegraphics[width=\textwidth]{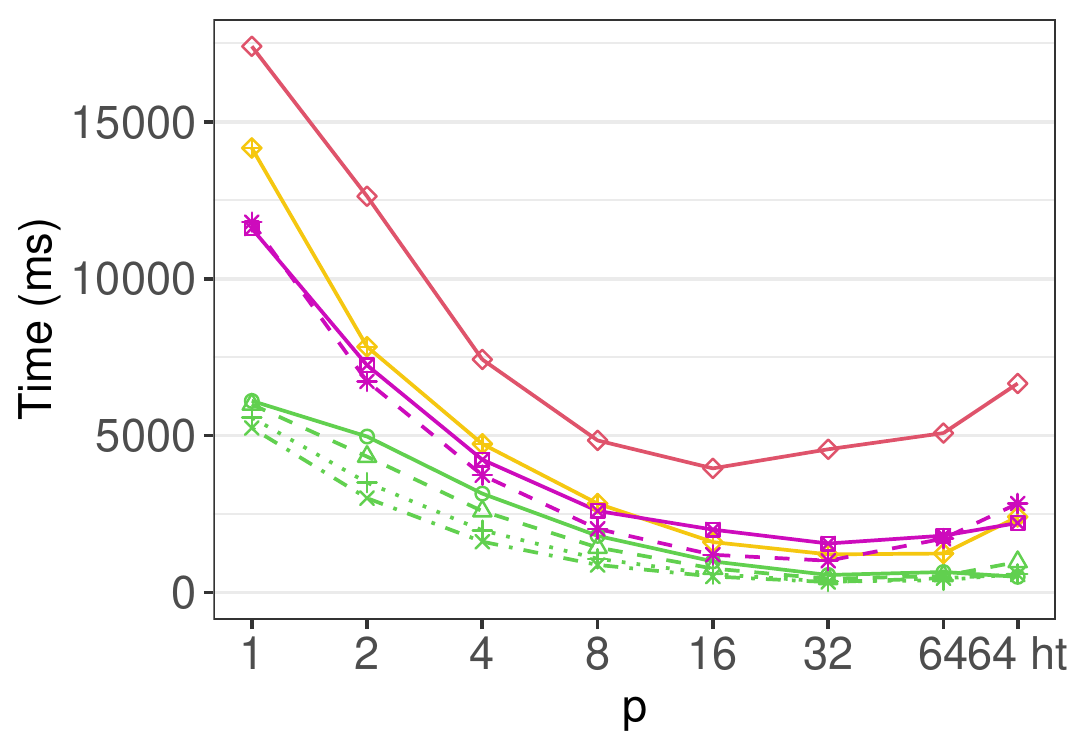}
  \end{minipage}
  \caption{\label{fig:sssp_usa_icx} The SSSP benchmark on the \emph{USA} graph running on machine $B$.}
\end{figure}
\begin{figure}[h]
  \begin{minipage}[t]{0.5\textwidth}
  \centering
  \includegraphics[width=\textwidth]{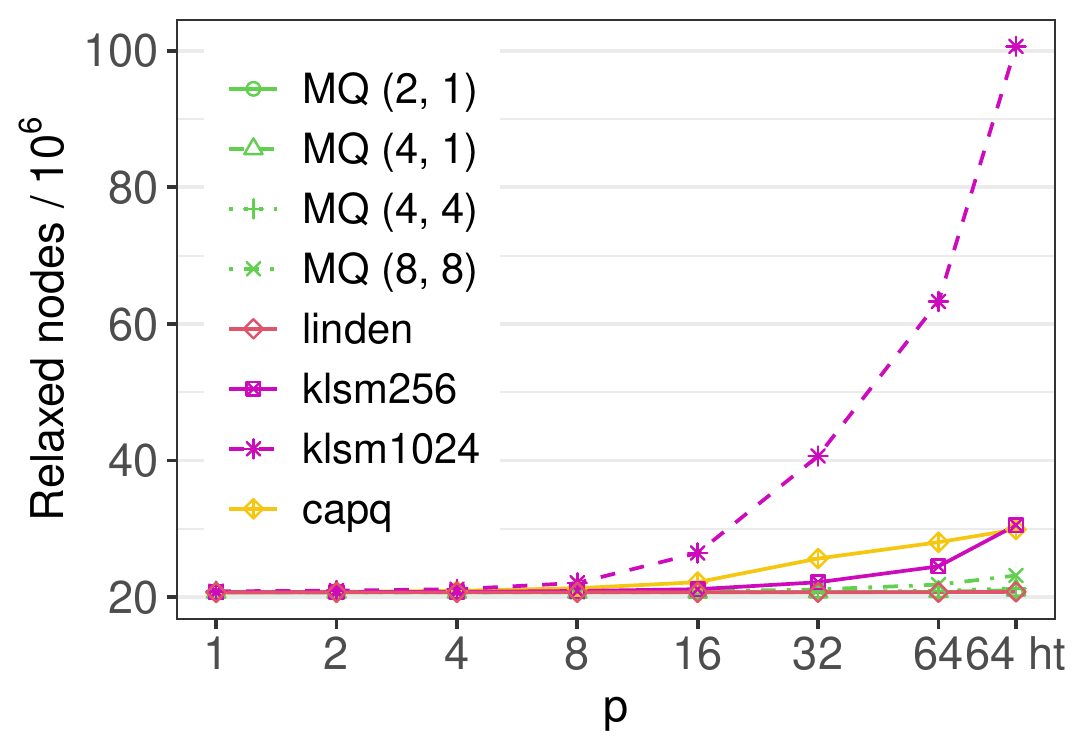}
\end{minipage}
  \begin{minipage}[t]{0.5\textwidth}
  \centering
  \includegraphics[width=\textwidth]{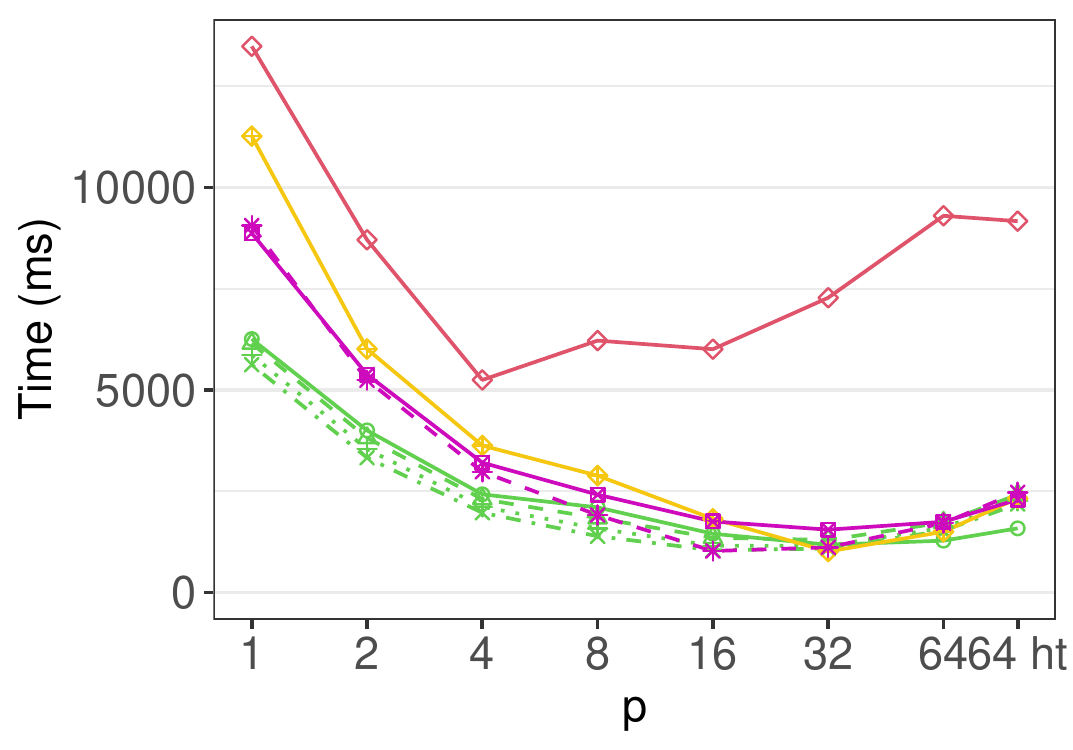}
\end{minipage}
\caption{\label{fig:sssp_ger} The SSSP benchmark on the \emph{GER} graph running on machine $A$.}
\end{figure}

\begin{figure}[h]
  \begin{minipage}[t]{0.5\textwidth}
    \centering
    \includegraphics[width=\textwidth]{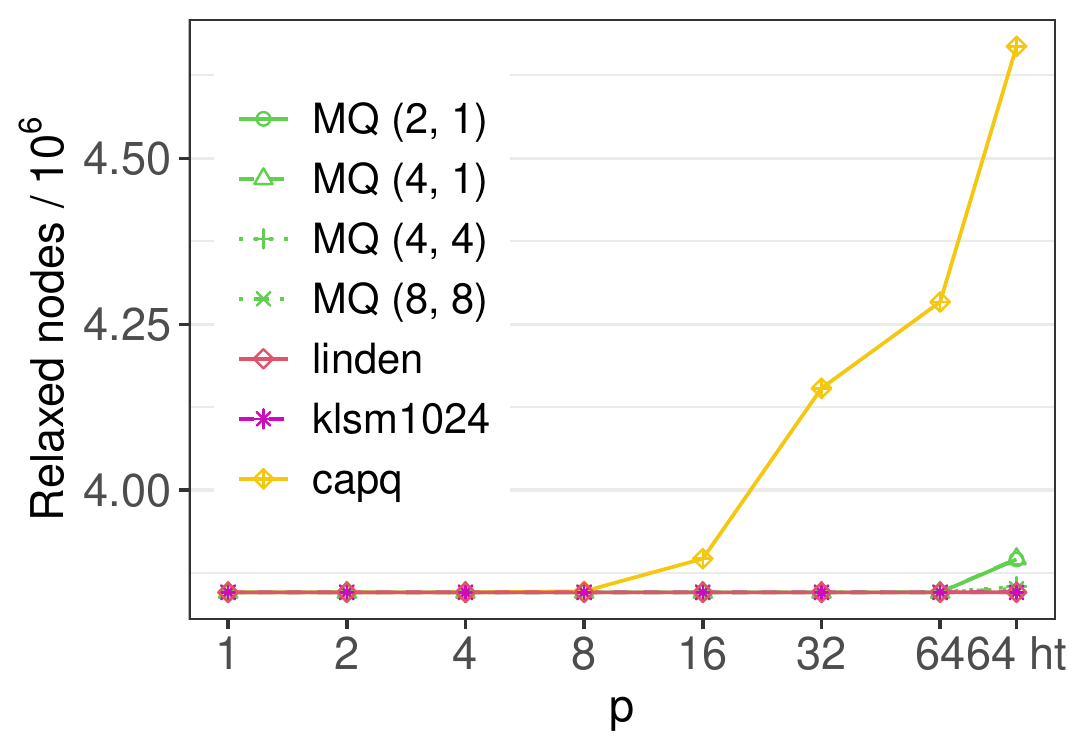}
  \end{minipage}
  \begin{minipage}[t]{0.5\textwidth}
    \centering
    \includegraphics[width=\textwidth]{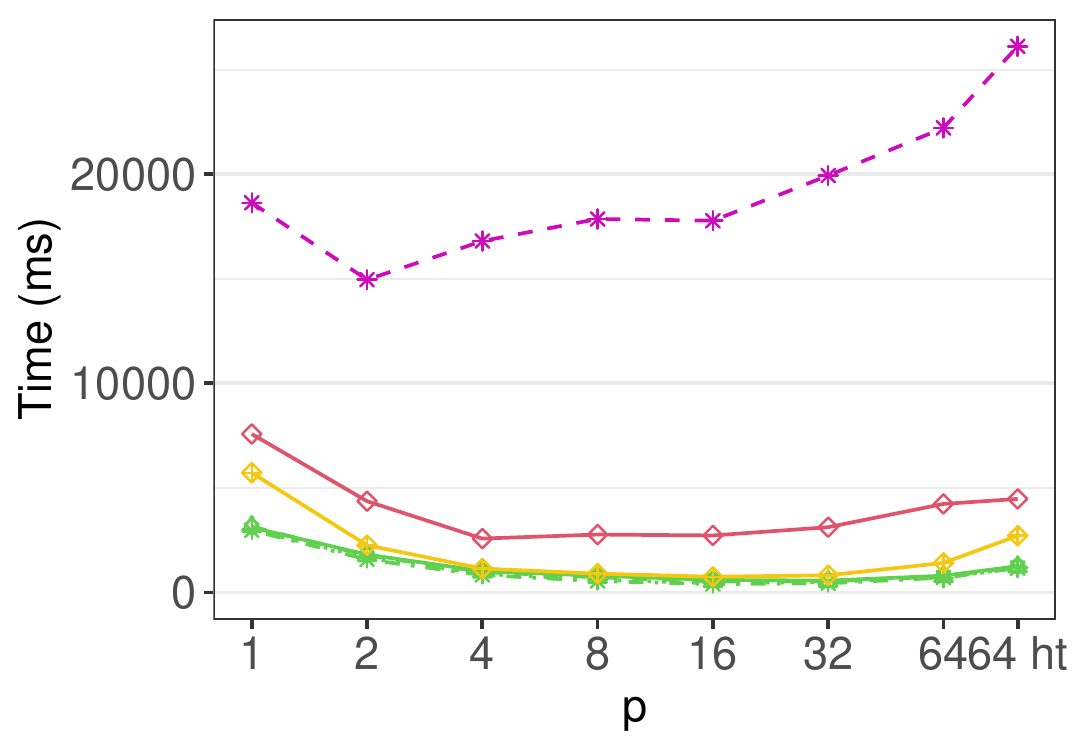}
  \end{minipage}
\caption{The SSSP benchmark on a random hyperbolic graph with $2^{22}$ nodes, an average degree of $16$ and $\gamma = 2.3$ running on machine $A$.}
\end{figure}

\end{document}